\documentclass[pdflatex,sn-mathphys-ay,iicol]{sn-jnl}

\usepackage{bm}
\usepackage{xspace}
\usepackage{geometry}
\usepackage{array}
\usepackage{subfigure}
\usepackage{epsfig}
\usepackage{hhline}
\usepackage{natbib}
\usepackage{bbm}
\usepackage[T1]{fontenc}
\usepackage[]{graphicx,float,latexsym,times}
\usepackage{amsfonts,amstext,amsmath,amssymb,amsthm}
\usepackage{algorithm}%
\usepackage{algorithmicx}%
\usepackage{algpseudocode}%
\usepackage{listings}%

\usepackage{booktabs}     
\usepackage{multirow}     
\usepackage{multicol}     
\usepackage{makecell}     
\usepackage{graphicx}     
\usepackage{arydshln}     
\usepackage{caption}    
\DeclareCaptionLabelFormat{algnonumber}{Algorithm}
\usepackage[utf8]{inputenc}    
\DeclareMathAccent{\widehat}{\mathord}{largesymbols}{"62}
\DeclareMathAccent{\widetilde}{\mathord}{largesymbols}{"65}

\def\sgn{\mathop{\mathrm{sgn}}}
\newcolumntype{B}{>{\usefont{T1}{ptm}{m}{n}\fontsize{.3cm}{.3cm}\selectfont }c} 

\def\eb*{\textrm{\mathversion{bold}$\mathbf{\beta^*}$\mathversion{normal}}}

\def\oo{\textrm{\mathversion{bold}$\mathbf{0}$\mathversion{normal}}}
\def\eb{\textrm{\mathversion{bold}$\mathbf{\beta}$\mathversion{normal}}}

\def\eU{\textrm{\mathversion{bold}$\mathbf{\Upsilon}$\mathversion{normal}}} 

\def\eE{\mathbb{E}}
\def\eP{I\!\!P}
\def\e1{1\!\!1}

\theoremstyle{plain}

\newtheorem{theorem}{Theorem}[section]

\newcommand{\beqn}{\begin{eqnarray*}}
\newcommand{\eeqn}{\end{eqnarray*}}
  
\def\ee1{\textrm{\mathversion{bold}$\mathbf{\varepsilon}$\mathversion{normal}}}

\def\eu{\mathbf{{u}}}
\def\ez{\mathbf{{z}}}

\newcommand{\N}{\mathbb{N}}
\newcommand{\R}{\mathbb{R}}
\newcommand{\PP}{\mathbb{P}}

\def\eX{\mathbf{X}}

\newcommand{\bfW}{{\bf W}}

\newcommand{\Var}{\mathbb{V}\mbox{ar}\,}

\def\argmin{\mathop{\mathrm{arg\,min}}} 
\def\oo{\textrm{\mathversion{bold}$\mathbf{0}$\mathversion{normal}}}
\renewcommand{\algorithmicrequire}{\textbf{Input:}}
\renewcommand{\algorithmicensure}{\textbf{Output:}}

\begin{document}

\title {{\bf Transfert   learning and adaptive   LASSO quantile}}
 
 \author*{\fnm{Gabriela} \sur{Ciuperca}}\email{gabriela.ciuperca@univ-lyon1.fr}  
 \affil{\orgname{Université Lyon 1}, \orgdiv{CNRS, ICJ, UMR 5208}, \orgaddress{
 		\city{Villeurbanne}, 
 		\country{France}}}
 
 \abstract{We propose for a quantile regression an estimation method for transferring knowledge using two $L_1$ penalties based on an estimator obtained from a source database. The proposed transfer learning estimator satisfies the properties of consistency and sparsity. Its convergence rate and asymptotic behavior are studied in several scenarios. This knowledge transfer results in a shorter computation time than that of the standard adaptive  LASSO estimator. Another advantage of our method is that it can be applied to models with non-Gaussian errors. In addition,
 in order to implement the computing of   the adaptive transfer LASSO quantile estimator, we propose an algorithm. The simulations confirm the theoretical results and demonstrate that the adaptive learning estimator, calculated using the proposed algorithm, is more competitive than the LASSO estimators. Finally, we illustrate the practical utility of the proposed transfer learning estimator and algorithm using a real-data application involving the physicochemical properties of protein tertiary structures. }

 \keywords{  transfer learning - adaptive LASSO - quantile - algorithm.}

 \pacs[MSC Classification]{62F12, 62F35, 62J07, 62N02.}

 \maketitle


\section{Introduction}
\label{section_Introduction}
A major challenge in practical applications is the automatic selection of significant explanatory variables a high-dimensional model with non-Gaussian errors without resorting to hypothesis testing. In addition, estimation methods and the algorithms used to implement them must have good properties without taking too long to compute. Hence the idea of extracting some of the information from the source data and refining the estimator on a target dataset.
Therefore, in this paper, we consider a linear model in which the errors are not necessarily Gaussian and there may be a large number of explanatory variables. In recent years, the adaptive LASSO method is one of the most commonly used approaches to automatically select significant variables in high-dimensional models. On the other hand, to address non-normal error distributions, which may include outliers, on option is to consider quantile estimation. Combining these two estimation techniques,  the adaptive  LASSO quantile method was developed (see for  example \cite{Zou.Yuan.08}, \cite{Zheng-Gallagher-Kulasekera-13}, \cite{FFB14}, \cite{Kaul-Koul.15}, \cite{Ciuperca-16}, \cite{Hou.Meng.Tian.26}). 
The non-differentiability of the quantile function, combined with an $L_1$-type penalty, leads to numerical concerns and increased computation time, especially if the model is high-dimensional on a very large dataset. That is why, in this paper, we will consider two databases: a source database, which will be  larger, and a target database, which will be the smaller of the two. Based on the source data, we will simply calculate the quantile estimator, which will then be used as knowledge  to the target data. This transfer allows us to maintain fast convergence because it draws on knowledge derived from a large database, whereas the target database only enables us to automatically select significant variables. In this way, using this approach, we then construct an adaptive transfer LASSO quantile estimator on the target data. First, we show that this estimator has the sparsity property, and  determine its asymptotic distribution in three scenarios. These results   allow us to determine the optimal configuration of the source and target data, as well as the tuning parameters, in order to obtain a more accurate estimator that converges more quickly. Next, we propose an algorithm to compute this transfer learning estimator.  It should be noted that the simulations we conducted allows us to conclude that the computation time is faster using the transfer learning algorithm than using the adaptive LASSO quantile estimator with both the target and source datasets.  Moreover, numerical simulations confirm the theoretical results obtained for the transfer learning estimator and demonstrate its advantages over traditional adaptive LASSO quantile estimators. This is a particularly interesting result when target data is scarce, since our method only requires unpenalised parameter estimation on the source model, which is more accessible in practice. \\
Similar estimation studies involving one or more  source data, and then transferring that knowledge to a target model have been studied  extensively in recent years. So, \cite{Li-Cai-ZLi.22} propose a transfer learning approach for estimating the target regression parameters in high-dimensional linear regression, based on $K$ source regressions. This algorithm is based on calculating an 'information set', which contains source features that satisfy a sparsity property. For a large-scale source model and limited target data, \cite{Gu-Han-Duan.25} propose an angle-based transfer learning method. These two papers present an estimation method based on the assumption that model errors are centered and have bounded variance. This method involves penalizing the sum of the squares of the errors using the $L_1$ or $L_2$ norm.  \cite {Takada.Fujisawa.24} consider i.i.d. Gaussian errors and propose a method that combines LASSO and transfer methods. This method involves penalizing the least squares with a LASSO-type penalty and an L1-type penalty on the contraction vector, which is the difference between an unknown parameter and an initial estimator constructed from the source data. The convergence rate and asymptotic behavior of the  adaptive transfer LASSO estimator are investigated for various configurations of source and target sets.  For i.i.d. sub-Gaussian errors with mean 0 and bounded variance, \cite{Qu.2025} proposes an automatic transfer learning method involving the assignment of adaptive penalties to various parameters in order to automatically assess transferability.\\
If the errors do not follow a normal distribution, quantile regression is a possible estimation method. \cite{Zhang.Zhu.25} propose a transfer learning algorithm based on convolution smoothing to improve the estimation and forecasting of a high-dimensional quantile model. The proposed method is entirely different from the one discussed in the present paper. The properties of the quantile learning estimator presented here pertain to its rate of convergence. When the target and source data are sufficiently similar, the empirical quantile loss with LASSO penalty is used in the paper \cite{Qiao-He-Zhou.24} to propose transfer learning algorithms that provide error bounds for estimators of the target regression coefficients. \cite{Huang.Wang.Wu.22}, \cite{Jin.Yan.Aseltine.Chen.24} have also studied transfer Lasso   quantile estimators, but these minimize a different process than the one discussed in this paper. Other research on transfer learning can be found in: \cite{Tian.Feng.23}, \cite{Li.Zhang.Cai.Li.24} for high-dimensional generalized linear models, \cite{Wang.Wang.Lian.24} for adaptive Huber regression, \cite{Liu.Song.25} for expectile regression.\\
The paper is organized as follows. Section \ref{section_modele} introduces the model, the assumptions and the adaptive transfer LASSO quantile estimator. Section \ref{section_resultats} presents the asymptotic properties of this estimator. An algorithm for computing numerically  this is provided in Section \ref{section_alg}. Section \ref{section_numeric} presents numerical results   to analyze the adaptive transfer LASSO quantile estimator, confirm the theoretical properties discussed in Section \ref{section_resultats}, and compare it with two adaptive quantile LASSO estimators. In Section \ref{section_application} we present an application on real data. The proofs of the theoretical results are relegated to Section \ref{section_proofs}.

\section{Model and estimation method}
\label{section_modele}
In this section we introduce the source and target models, the assumptions made them, and the transfer learning estimator.\\ 
We will start this section by introducing the general notation used throughout the paper.  First, all vectors and matrices are denoted by bold symbols and all vectors are written as  column vectors. For a vector $\textbf{v}$, we denote by $\textbf{v}^\top$ its transposed and by $\|\textbf{v} \|_2$ its Euclidean norm.  For a positive definite matrix $\textbf{M}$, we denote by $\lambda_{\min}(\textbf{M)}$  and $\lambda_{\max}(\textbf{M)}$ its the smallest and largest eigenvalues, respectively.  If $E$ is an event, then $\e1_E$ denotes the indicator function that the event $E$ occurs.  Given a set ${\cal S}$, we denote its cardinality by $|{\cal S}|$.
We will also use   the following notations: if $V_n $ and $U_n$ are random variable sequences, $V_n=o_{\eP}(U_n)$ means that $\lim_{n \rightarrow \infty} \eP[|U_n/V_n| > e]=0$ for any $e>0$, $V_n=O_{\eP}(U_n)$ means that there exists a finite $C>0$ such that $\eP[|U_n/V_n| > C]< e$ for any $n$ and $e$. If $(a_n)_{n \in \N}$ and $(b_n)_{n \in \N}$ are  deterministic sequences,   $a_n=o (b_n)$ means that the sequence $a_n/b_n \rightarrow 0$ for $n \rightarrow\infty$, 	and if both sequences are positive, we will also use the notation $b_n \gg a_n$. And finally, the notation  $a_n=O (b_n)$ means that the sequence $a_n/b_n $ is bounded for sufficiently large $n$. Notations $ \overset{\cal L} {\underset{n \rightarrow \infty}{\longrightarrow}}$, $ \overset{\PP} {\underset{n \rightarrow \infty}{\longrightarrow}}$ represent the convergence in distribution and in probability, respectively, as $n \rightarrow \infty$.  Throughout this paper, $c$ will always denote a generic constant, not depending on $n$ or $m$, and its value is not of interest. \\
In the context of transfer learning, we consider the following linear model on the source data:
 \begin{equation}
 	\label{eq1_source}
 	Y_i=\sum_{j=1}^{p} X_{ji}\beta_j+\varepsilon_i=\eX_i^\top \eb+\varepsilon_i, \quad i=1, \cdots, m.
 \end{equation}
We study the estimation of the following model  on the target data:
  \begin{equation}
 	\label{eq1_cible}
 	Y_i=\sum_{j=1}^{p} X_{ji}\beta_j+\varepsilon_i=\eX_i^\top \eb+\varepsilon_i, \quad i=m+1, \cdots, m+n.
 \end{equation}
The same dependent variable $Y$ and the same explanatory variables $\eX=(X_1, \cdots , X_p)$ were observed  in both models but the approach remains valid if the source data contains additional explanatory variables. Consequently, the true parameter $\eb^*=(\beta_1^*, \cdots, \beta^*_p)$, which is unknown, is the same for models (\ref{eq1_source}) and (\ref{eq1_cible}). Let's also consider the set ${\cal S}^*\equiv \{ j \in \{1, \cdots, p\} ; \beta^*_j \neq 0\}$ contains the indices of the non-zero coefficients. The number $p$ of variables is less than $n$.\\
The explanatory variables of (\ref{eq1_source})  and (\ref{eq1_cible}) are assumed to be deterministic while the model errors $(\varepsilon_i)_{1 \leqslant i \leqslant m+n}$ and the dependent variable $Y$ are random.\\
 We assume that $\lim_{(m,n)\rightarrow\infty} n m^{-1}=r_0\geq 0$, so, the target sample size $n$ cannot be much larger than  $m$.\\
In this paper, we do not make the common assumption in applications that the model’s errors follow a normal distribution.  One way to estimate the model’s parameters is to use the quantile method, which employs the quantile function as its loss function $\rho_\tau(r) = \tau r  \e1_{r>0}- (1-\tau)r \e1_{r \leq 0}$, $r \in \R$.\\
When there are a large number of explanatory variables, automatic selection of significant variables is a more appropriate approach than hypothesis testing. One of the techniques used for automatic variable selection is the adaptive LASSO penalty. This was introduced by \cite{Zou.06} for a least squares model and by \cite{Zou.Yuan.08} for composite quantile regression. When the number of observations in the target data is small, a natural approach to obtaining an estimator that satisfies the sparsity property—one that has the same properties as the estimator calculated using both the target and source data—is to first construct a consistent estimator based on the source data and then transfer and enhance those properties to an estimator constructed using the target data. 
Building on this idea, this paper proposes estimating the parameter $\eb$ of the target model using the source data while automatically selecting the relevant explanatory variables. 
To achieve this, we first calculate the quantile estimator on the source data:
 \begin{equation}
 	\label{eq2}
 	\widetilde \eb_m \equiv \argmin_{\eb \in \R^p}\sum^m_{i=1} \rho_\tau(Y_i-\eX_i^\top \eb).
 \end{equation}
 The components of  $\widetilde \eb_m$ are denoted $(\widetilde \beta_{m,1}, \cdots , \widetilde \beta_{m,p})$. \\
This estimator will serve as a transfer function for constructing the adaptive  transfer LASSO quantile estimator based on the target data:
 	\begin{multline}
 			\label{eq_cible}
 		\widehat \eb_{m,n} \equiv \argmin_{\eb \in \R^p} \bigg(\sum^{m+n}_{i=m+1} \rho_\tau(Y_i-\eX_i^\top \eb)\\
 		+\lambda_n \sum^p_{j=1} v_{m,j} |\beta_j| +\eta_n \sum^p_{j=1} \omega_{m,j} |\beta_j -\widetilde\beta_{m,j}| \bigg),
 	\end{multline}
with the random variables: 
\[
v_{m,j}=|\widetilde\beta_{m,j}|^{-\gamma_1}, \qquad \omega_{m,j}=|\widetilde\beta_{m,j}|^{\gamma_2},
\]
with $\gamma_1$ and $\gamma_2$ two known parameters $\geq 0$. 
For the sake of simplicity, unless otherwise specified, we will omit the subscript $m$ and denote $\widehat \eb_{m,n}$ by $\widehat \eb_{n}$ and its components by $(\widehat \beta_{n,1}, \cdots , \widehat \beta_{n,p})$.   This estimator has the advantage of automatically selecting zero coefficients, and, through the second penalty term, serves to transfer information from the source model to the target model.\\
Minimizing the expression on the right-hand side of (\ref{eq1_cible}) is equivalent to minimizing $\sum^{m+n}_{i=m+1} \rho_\tau(Y_i-\eX_i^\top \eb)$, subject to two constraints on $\eb$: $\sum^p_{j=1} v_{m,j} |\beta_j| \leq c_{1,n}$ and $\sum^p_{j=1} \omega_{m,j} |\beta_j -\widetilde\beta_{m,j}| \leq c_{2n}$, with $c_{1,n}$ and $c_{2,n}$ constants that depend on $n$.\\
Taking these two penalties/constraints into account is equivalent to finding, among the $(\beta_j)_{1 \leqslant j \leqslant p}$  values, the minimizer of the quantile process those that lie within the  region $\sum^p_{j=1} v_{m,j} |\beta_j| \leq c_{1,n}$ which will result in sparse estimators, and also within the  region $\sum^p_{j=1} \omega_{m,j} |\beta_j -\widetilde\beta_{m,j}| \leq c_{2n}$, which will encourage $\widehat \beta_{n,j}$ to be close to $\widetilde \beta_{m,j}$ and thus reduce bias.\\
For models (\ref{eq1_source}) and (\ref{eq1_cible}), the following three assumptions are considered. First, for model errors, we assume:
\begin{description}
	\item \textbf{(A1)} $(\varepsilon_i)_{1 \leq i \leq m+n}$ are i.i.d., with  the distribution function  $F$ and density function $f$. The density function $f$ is continuously,  strictly positive in a neighborhood of zero and has a bounded first derivative in the neighborhood of 0. The $\tau$th quantile of $\varepsilon_i$ is zero: $\tau= F(0)$. 
\end{description}  
The design $(\eX_i)_{1 \leqslant i \leqslant n}$ is such that: 
\begin{description}
	\item \textbf{(A2)} $n^{-1} \sum^n_{i=1} \eX_i \eX_i^\top {\underset{n \rightarrow \infty}{\longrightarrow}} \eU$, with $\eU$ a positive definite $p \times p$ matrix.  
	\item \textbf{(A3)} For  $(m,n) \rightarrow\infty$, $d^{-1}_{m,n} \max_{1 \leqslant i \leqslant m+n} \| \eX_i \|_2  \rightarrow 0$, with $d_{m,n}=\min(m^{1/2}, n^{1/2}, n\lambda_n^{-1})$.
\end{description}  
The elements of the matrix $\eU$ are denoted by $(\Upsilon_{ij})_{1 \leqslant i ,j \leqslant p}$.\\
Assumptions (A1) and (A2) are standard for a quantile linear regression (see \cite{Koenker-05}, \cite{Zou.Yuan.08}, \cite{Ciuperca-16}, \cite{Ciuperca.2019}). Furthermore, assumption (A3) is required in order to control the convergence rate of $\widehat \eb_n$.\\
To study the properties of $\widehat \eb_n$, let us consider  the parameter $\eb$ of the form  $\eb = \eb^* + l^{-1} \eu$, where $l \in \mathbb{R}^+$, $\eu \in \mathbb{R}^p$, and $\| \eu \|_2 \leq c < \infty$. In fact, $l$ is a sequence that may depend on $m$, $n$, and $\lambda_n$ such that $l \rightarrow\infty$ as $m$ and $n \rightarrow\infty$. Further on,  we will consider the following three cases for the sequence $l$, which depends on $m$ or $n$: $l=m^{1/2}$, $l=n^{1/2}$, or $l=n/\lambda_n$. \\
 Then, for $i=m+1, \cdots, m+n$, we can write: $Y_i=\eX_i^\top \eb^*+\varepsilon_i=\eX_i^\top \eb -l^{-1}\eX_i^\top \eu +\varepsilon_i$, from which we have:
\[
Y_i - \eX_i^\top \eb =\varepsilon_i- l^{-1}\eX_i^\top \eu.
\]

Under assumptions (A1) and (A2), according to \cite{Koenker-05}, the random vector $\widetilde \eb_m$ defined by (\ref{eq2}) is $\sqrt{m}$-consistent and asymptotically normal. Then,  let us consider the random vector  $\ez_m = \sqrt{m} (\widetilde \eb_m-\eb^*)$. Thus $\widetilde \eb_m=\eb^*+m^{-1/2} \ez_m$, with $\ez_m \in \R^p$, $\|\ez_m \|_2 \leq c < \infty$ with a probability converging to 1 when $m \rightarrow \infty$. From \cite{Koenker-05} we have under assumptions (A1)  and (A2) that   $\ez_m \overset{\cal L} {\underset{m \rightarrow \infty}{\longrightarrow}}  {\cal N}(\oo_p, \tau(1-\tau)f^{-2}(0) \eU^{-1}) \equiv \ez $.\\
Finally, let’s note that  the components of the vectors $\eu$ and $\ez_m$ are, respectively, $(u_1, \cdots , u_p)$ and $(z_{m,1}, \cdots, z_{m,p})$. 

\section{Asymptotic properties of $\widehat \eb_n$}
\label{section_resultats}
In this section, we will present the convergence rate, the asymptotic distribution, and the sparsity of the estimator $\widehat \eb_n$ in (\ref{eq1_cible}). The proofs of the two theorems presented below are given in   Section \ref{section_proofs}. \\
For this study, we first need to examine 
\begin{equation}
	\label{eq3}
	{\cal P}_n \equiv \frac{1}{n} \sum^{m+n}_{i=m+1} \big(\rho_\tau(\varepsilon_i- l^{-1}\eX_i^\top \eu)- \rho_\tau(\varepsilon_i)\big)
\end{equation}
which will be done using the results of Theorem 1 in  \cite{Ciuperca.2019} and Theorem 2.1 in \cite{Zou.Yuan.08}.\\
The random process $    {\cal P}_n$ can be written in the form:
\begin{equation}
	\label{eq4}
		{\cal P}_n=\eE[	{\cal P}_n]+\frac{1}{n} \bfW_n \eu +\frac{1}{n} \sum^{m+n}_{i=m+1} \big({\cal R}_i -\eE[{\cal R}_i]\big),
\end{equation}
with
	\begin{equation*}
	\left\{
	\begin{split} 
		{\cal D}_i & \equiv (1-\tau) \e1_{\varepsilon_i <0}-\tau \e1_{\varepsilon_i \geq 0}, \\
{\cal R}_i	 &  \equiv \rho_\tau(\varepsilon_i -l^{-1} \eX^\top_i \eu) -\rho_\tau(\varepsilon_i) -l^{-1} {\cal D}_i \eX^\top_i \eu,\\
\bfW_n& \equiv l^{-1} \sum^{m+n}_{i=m+1}	{\cal D}_i \eX_i^\top.
	\end{split}
	\right.
\end{equation*}
 For $\eE[    {\cal P}_n]$, using the identity:
 \[
 \rho_\tau(x-y) - \rho_\tau(x)=y(\e1_{x < 0} - \tau) +\int^y_0(\e1_{x <t} - \e1_{x<0})dt,
 \]
 and the fact that $\eE[\e1_{\varepsilon <0}]=F(0)=0$ in assumption (A1),  we have:
 \begin{align*}
 	\eE[	{\cal P}_n] &= \frac{1}{n} \eE \big[\sum^{m+n}_{i=m+1}  \big(\rho_\tau(\varepsilon_i -l^{-1} \eX^\top_i \eu) -\rho_\tau(\varepsilon_i)\big)\big] \\
 	&=\frac{1}{n} \sum^{m+n}_{i=m+1} \eE \bigg[\int_{0}^{l^{-1} \eX^\top_i \eu} \e1_{0< \varepsilon_i < t} dt\bigg] \\
 	 & =\frac{1}{n} \sum^{m+n}_{i=m+1} \int_{0}^{l^{-1} \eX^\top_i \eu} \big(F(t) -F(0)\big) dt.
 \end{align*}
 We will use  that, for $v \rightarrow 0$ we have $\int_{0}^{v} \big(F(t) -F(0)\big) dt=2^{-1}f(0) v^2+o(v^2)$. Therefore, taking into account assumption (A2), we have:
  \begin{align*}
 	 &
 	=\frac{f(0)}{2} l^{-2} \eu^\top \bigg(\frac{1}{n} \sum^{m+n}_{i=m+1} \eX_i \eX_i^\top\bigg) \eu \big(1+o(1)\big).
 \end{align*}
 On the other hand, using the independence of the $\varepsilon_i$ given by assumption (A1) and the fact that $|{\cal R}_i| \leq l |\eX_i^\top \eu |$ with probability 1, we have:
 \begin{align*}
 &	\eE \big[\sum^{m+n}_{i=m+1} \big({\cal R}_i -\eE[{\cal R}_i]\big)^2\big]\\
 &	=\sum^{m+n}_{i=m+1} \eE \big[{\cal R}_i -\eE[{\cal R}_i]\big]^2  \leq \sum^{m+n}_{i=m+1} \eE^2 [{\cal R}_i]\\
 &	 \leq  l^{-2} \sum^{m+n}_{i=m+1} \big|\eX_i^\top \eu\big|^2 \eE[\e1_{|\varepsilon_i | < l^{-1}|\eX_i^\top \eu|}] \\
 &	 = l^{-2} \sum^{m+n}_{i=m+1}\big|\eX_i^\top \eu\big|^2  \big(F(l^{-1}|\eX_i^\top \eu|) - F(-l^{-1}|\eX_i^\top \eu|)\big)\\
 &	=o\big(l^{-2} \sum^{m+n}_{i=m+1}\big|\eX_i^\top \eu\big|^2\big).
 \end{align*}
For the last relation, we used $F(l^{-1}|\eX_i^\top \eu|) - F(-l^{-1}|\eX_i^\top \eu|) \leq c l^{-1} |\eX_i^\top \eu| = o(1)$, by assumptions (A1) and (A3). Thus, the random process ${\cal P}_n$ from (\ref{eq4}) becomes:
\begin{equation}
\begin{split}
	{\cal P}_n =&\frac{f(0)}{2} l^{-2} \eu^\top \bigg(\frac{1}{n} \sum^{m+n}_{i=m+1} \eX_i \eX_i^\top \bigg) \eu  \big(1+o_\PP(1)\big)\\
	&+\frac{l^{-1}}{\sqrt{n}}  \sum^{m+n}_{i=m+1} {\cal D}_i \frac{\eX_i^\top \eu}{\sqrt{n}}.
\end{split}
	\label{eq5}
\end{equation}
However, since $E[D_i] = 0$ and using assumption (A2), we have by the CLT: 
\[
\frac{1}{\sqrt{n}}  \sum^{m+n}_{i=m+1} {\cal D}_i \eX_i  \overset{\cal L} {\underset{n \rightarrow \infty}{\longrightarrow}}  {\cal N}(\oo_p,\tau (1-\tau)\eU)\equiv \bfW.
\]
Thus, we can write for relation (\ref{eq5}):
\begin{equation}
\label{eq6}
{\cal P}_n=\bigg(\frac{f(0)}{2} l^{-2} \eu^\top \eU \eu +\frac{l^{-1}}{\sqrt{n}}\bfW \eu\bigg)  \big(1+o_\PP(1)\big).
\end{equation}
Relation (\ref{eq6}) obtained using assumptions (A1), (A2), (A3), will be used in the proof of Theorem \ref{TheoremTF 4.1}. The following theorem therefore provides the convergence rate and the asymptotic distribution of $\widehat\eb_n$ by considering several configurations of $m$, $n$, and $\lambda_n$. To achieve this, we consider the random process:
\begin{equation}
	\begin{split}
	Z_{m,n}(\eb) \equiv& \frac{1}{n}\sum_{i=m+1}^{m+n} \rho_\tau(Y_i- \eX_i^\top \eb)+ \frac{\lambda_n}{n} \sum_{j=1}^{p} v_{m,j} |\beta_j |\\
&+\frac{\eta_n}{n} \sum_{j=1}^{p} \omega_{m,j}|\beta_j - \widetilde \beta_{m,j}|.
\end{split}
\label{relZ}
\end{equation}

\begin{theorem}
\label{TheoremTF 4.1}
Under assumptions (A1), (A2), (A3) we have:\\
(i)  If $m^{\gamma_1/2}n^{-1/2} \lambda_n {\underset{(m,n) \rightarrow \infty}{\longrightarrow}} \infty$,  $n^{-1/2}\eta_n {\underset{n \rightarrow \infty}{\longrightarrow}} \infty$, $\eta_n\lambda_n^{-1} {\underset{n \rightarrow \infty}{\longrightarrow}} \infty$, $m^{-(\gamma_1+\gamma_2)/2}\eta_n\lambda_n^{-1}  {\underset{(m,n) \rightarrow \infty}{\longrightarrow}} 0$, and $m^{{\gamma_1}/2} \lambda_n\eta_n^{-1} {\underset{(m,n) \rightarrow \infty}{\longrightarrow}} r_0 \geq 0$, then 
\begin{equation}
	\label{eqTF30}
	m^{1/2} (\widehat\eb_n-\eb^*)    \overset{\cal L} {\underset{(m,n) \rightarrow \infty}{\longrightarrow}} 
	\left\{
	\begin{array}{ll}
		0, &\quad \text{if} \quad j \in {{\cal S}^*}^c,\\
		z_j,& \quad \text{if} \quad j \in {\cal S}^* ,
	\end{array}
	\right.
\end{equation} 
with $z_j \equiv \lim_{m \rightarrow \infty} z_{m,j}$, the asymptotic distribution of $z_{m,j}$. The distribution of $z_j$ is ${\cal N}(0, \tau (1-\tau) f^{-2}(0)\ \Upsilon^{-1}_{jj})$.\\
(ii)  If $m^{\gamma_1/2} n^{-1/2} \lambda_n  {\underset{(m,n) \rightarrow \infty}{\longrightarrow}}\infty$, $n^{-1/2} \lambda_n {\underset{n \rightarrow \infty}{\longrightarrow}} \lambda_0 \geq 0$ and $n^{-1/2} \eta_n {\underset{n \rightarrow \infty}{\longrightarrow}} \eta_0 \geq 0$, then
\begin{equation*}
	\label{eqTF37}
	\begin{split}
	n^{1/2} (\widehat\eb_n-\eb^*)    \overset{\cal L} {\underset{(m,n) \rightarrow \infty}{\longrightarrow}} \argmin_{\eu \in {\cal U}}\bigg(\frac{f(0)}{2}\eu^\top \eU \eu + \eu^\top \bfW \\
	+\sum_{j \in {\cal S}^*}\big(\lambda_0 \frac{\sgn(\beta^*_j)}{|\beta^*_j|^{\gamma_1}}u_j+\eta_0|\beta^*_j|^{\gamma_2}|u_j-r_0^{1/2}z_j|\big)\bigg),
\end{split}
\end{equation*} 
with ${\cal U} \equiv \{ \eu \in \R^p; \eu_{{{\cal S}^*}^c}=\oo\}$.\\
(iii) If $n^{-1/2} \lambda_n{\underset{n \rightarrow \infty}{\longrightarrow}} \infty$, $n^{-1} \lambda_n{\underset{n \rightarrow \infty}{\longrightarrow}} 0$, $\lambda_n  \eta_n^{-1} {\underset{n \rightarrow \infty}{\longrightarrow}} \infty$, then
\begin{equation*}
	\label{eqTF39}
	\begin{split}
	& \frac{n}{\lambda_n} (\widehat\eb_n-\eb^*)    \overset{\cal L} {\underset{n \rightarrow \infty}{\longrightarrow}}\\
&	  \argmin_{\eu \in {\cal U}}\bigg(\frac{f(0)}{2}\eu^\top \eU \eu +\sum_{j \in {\cal S}^*} \frac{\sgn(\beta^*_j)}{|\beta^*_j|^{\gamma_1}}u_j\bigg),
	 \end{split}
\end{equation*} 
with ${\cal U} \equiv \{ \eu \in \R^p ;  \eu_{{{\cal S}^*}^c}=\oo\}$.
\end{theorem}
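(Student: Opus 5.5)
The plan is the standard convexity/epi-convergence argument of \cite{Zou.06}, \cite{Zou.Yuan.08} and \cite{Ciuperca-16}. For each case (i)--(iii) we fix the rate $l\in\{m^{1/2},n^{1/2},n/\lambda_n\}$, write $\eb=\eb^*+l^{-1}\eu$, and study the normalized difference
\[
V_n(\eu)\equiv c_l\big(Z_{m,n}(\eb^*+l^{-1}\eu)-Z_{m,n}(\eb^*)\big).
\]
The scaling $c_l$ is chosen so that the dominant terms are $O_\PP(1)$: $c_l=l^2$ in cases (ii) and (iii), and $c_l=m$ in case (i). The quantile part is handled by relation (\ref{eq6}), which yields
\[
\frac{f(0)}{2}\eu^\top\eU\eu+\frac{l}{\sqrt n}\,\bfW^\top\eu
\]
after multiplying by $l^2$. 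The goal is to show that $V_n$ converges pointwise in distribution, hence uniformly on compacts by convexity (Geyer/Pollard), to a convex limit $V(\eu)$ with a unique minimizer. The argmin theorem then gives $l(\widehat\eb_n-\eb^*)\overset{\cal L}{\longrightarrow}\argmin V$.

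The penalty increments are treated separately on ${\cal S}^*$ and on ${{\cal S}^*}^c$, using $\widetilde\eb_m=\eb^*+m^{-1/2}\ez_m$ with $\ez_m=O_\PP(1)$.

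For $j\in{\cal S}^*$ we have $v_{m,j}\to|\beta_j^*|^{-\gamma_1}$ and $\omega_{m,j}\to|\beta^*_j|^{\gamma_2}$ in probability. The adaptive LASSO increment is therefore
\[
\frac{l^2}{n}\lambda_n v_{m,j}\big(|\beta_j^*+u_j/l|-|\beta_j^*|\big)\approx \frac{l\lambda_n}{n}\,\frac{\sgn(\beta_j^*)u_j}{|\beta^*_j|^{\gamma_1}}.
\]
This equals $\lambda_0\sgn(\beta^*_j)|\beta^*_j|^{-\gamma_1}u_j$ in case (ii), equals $\sgn(\beta^*_j)|\beta^*_j|^{-\gamma_1}u_j$ in case (iii), and vanishes in case (i). The transfer increment is
\[
\frac{l^2\eta_n}{n}\,\omega_{m,j}\Big(\big|l^{-1}u_j-m^{-1/2}z_{m,j}\big|-m^{-1/2}|z_{m,j}|\Big).
\]
In case (ii) this becomes $\eta_0|\beta^*_j|^{\gamma_2}\big(|u_j-(n/m)^{1/2}z_{m,j}|-\ldots\big)$, which produces the $r_0^{1/2}z_j$ shift. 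In case (iii) it is negligible because $\eta_n/\lambda_n\to0$.

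For $j\notin{\cal S}^*$ we have $v_{m,j}=m^{\gamma_1/2}|z_{m,j}|^{-\gamma_1}$, so the LASSO term is
\[
\frac{l\lambda_n}{n}\,m^{\gamma_1/2}|z_{m,j}|^{-\gamma_1}|u_j|,
\]
and $\omega_{m,j}=m^{-\gamma_2/2}|z_{m,j}|^{\gamma_2}$ is small. The hypotheses $m^{\gamma_1/2}n^{-1/2}\lambda_n\to\infty$ in (i)--(ii), and $\lambda_n n^{-1/2}\to\infty$ in (iii) (where $\lambda_n/n\cdot n/\lambda_n=1$ already suffices once multiplied by $m^{\gamma_1/2}$), make this term $+\infty$ unless $u_j=0$. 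This forces the limit minimizer into ${\cal U}$ and gives the zero limit on ${{\cal S}^*}^c$.

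Case (i) is the delicate one and I expect it to be the main obstacle. With $l=m^{1/2}$ and $n\ll m$ possible, the quantile part contributes only $(m/n)\cdot O(n/m)$ after scaling by $n/m$. The natural normalization is therefore the transfer term. Since $\eta_n n^{-1/2}\to\infty$ and $\eta_n\gg\lambda_n$, the dominant term for $j\in{\cal S}^*$ is
\[
\frac{\eta_n}{n}\,m^{-1/2}\,\omega_{m,j}\big(|u_j-z_{m,j}|-|z_{m,j}|\big),
\]
which dominates both the quadratic form and $\bfW$, and which is minimized at $u_j=z_{m,j}$. I would normalize $Z_{m,n}$ by $n m^{1/2}\eta_n^{-1}$. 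The remaining conditions $m^{-(\gamma_1+\gamma_2)/2}\eta_n/\lambda_n\to0$ and $m^{\gamma_1/2}\lambda_n/\eta_n\to r_0$ are used to compare, on ${{\cal S}^*}^c$, the LASSO term $\lambda_n m^{\gamma_1/2}|u_j|$ with the transfer term $\eta_n m^{-\gamma_2/2}|u_j-z_{m,j}|$. The first must dominate so that $u_j=0$ is forced, and this is exactly the first ratio condition. A further difficulty is that the limit $\sum_j|\beta_j^*|^{\gamma_2}|u_j-z_j|$ is convex but not strictly convex in directions where it is flat. Uniqueness holds here because each summand has a unique kink minimizer, so the argmin theorem still applies. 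Finally, the joint convergence of $(\ez_m,\bfW)$ follows from independence of the source and target samples.
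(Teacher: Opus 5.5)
Your overall strategy is the one the paper uses: rescale $Z_{m,n}(\eb^*+l^{-1}\eu)-Z_{m,n}(\eb^*)$, identify a convex limit, and apply the argmin theorem. Your treatment of cases (ii) and (iii) is fine, because there the quadratic form $\frac{f(0)}{2}\eu^\top\eU\eu$ survives and makes the limit strictly convex, and the LASSO penalty on ${{\cal S}^*}^c$ genuinely diverges.

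The gap is in case (i). Under your single normalization $n m^{1/2}\eta_n^{-1}$, the LASSO term for $j\notin{\cal S}^*$ is $m^{\gamma_1/2}\lambda_n\eta_n^{-1}|z_{m,j}|^{-\gamma_1}|u_j|$. It converges to $r_0|z_j|^{-\gamma_1}|u_j|$, where $r_0=\lim m^{\gamma_1/2}\lambda_n\eta_n^{-1}<\infty$, so it does not blow up. The transfer term for $j\notin{\cal S}^*$ is $O_\PP(m^{-\gamma_2/2})$ and the quantile part is $O_\PP(n^{1/2}\eta_n^{-1})$. The limit is therefore
\[
\sum_{j\in{\cal S}^*}|\beta^*_j|^{\gamma_2}\big(|u_j-z_j|-|z_j|\big)+r_0\sum_{j\notin{\cal S}^*}|z_j|^{-\gamma_1}|u_j| .
\]
The theorem explicitly allows $r_0=0$, and this is not a corner case. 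The paper's own simulation setting for case (i) ($m=n^2$, $\gamma_1=2$, $\lambda_n=n^{1/2}$, $\eta_n=n^{2.501}$) gives $m^{\gamma_1/2}\lambda_n\eta_n^{-1}=n^{-0.001}\to0$.

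When $r_0=0$ the limit is constant in $\eu_{{{\cal S}^*}^c}$. Your ``unique kink minimizer'' argument fails there, because the weights $|\beta^*_j|^{\gamma_2}$ vanish off ${\cal S}^*$ (the hypotheses force $\gamma_2>0$). The argmin theorem then says nothing, not even tightness, about $m^{1/2}\widehat\beta_{n,j}$ for $j\notin{\cal S}^*$. Your remark that the LASSO term ``dominates'' the transfer term on ${{\cal S}^*}^c$ only compares their ratio; at your scale both are $o_\PP(1)$.

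The missing idea is a second scale, which is why the paper splits case (i) according to the limit of $m^{\gamma_1/2}\lambda_n\eta_n^{-1}$. If that limit is positive, your normalization works (the paper's subcase (i)-(c)). If it is $0$, normalize instead by $n m^{1/2}/(m^{\gamma_1/2}\lambda_n)$ (subcase (i)-(a)). At that scale:
\begin{itemize}
\item the LASSO term on ${{\cal S}^*}^c$ becomes exactly $\sum_{j\notin{\cal S}^*}|z_{m,j}|^{-\gamma_1}|u_j|$;
\item the transfer term on ${{\cal S}^*}^c$ is $O_\PP(m^{-(\gamma_1+\gamma_2)/2}\eta_n\lambda_n^{-1})\to0$;
\item the quantile part is $O_\PP(n^{1/2}m^{-\gamma_1/2}\lambda_n^{-1})\to0$;
\item the transfer term on ${\cal S}^*$ carries the diverging factor $\eta_n/(m^{\gamma_1/2}\lambda_n)$, which pins $\eu_{{\cal S}^*}$ at $\ez_{m,{\cal S}^*}$ once its minimal value is subtracted.
\end{itemize}
The limit then has the unique minimizer $\eu_{{{\cal S}^*}^c}=\mathbf{0}$. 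This is exactly where the hypotheses $m^{-(\gamma_1+\gamma_2)/2}\eta_n\lambda_n^{-1}\to0$ and $m^{\gamma_1/2}n^{-1/2}\lambda_n\to\infty$ enter. In your single-scale argument neither hypothesis is actually used.
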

Using this theorem, we can identify the configuration that will produce the most rapid convergence of the estimator $\widehat \eb_n$ to the true value. If $m \gg n$, the fastest convergence rate occurs in case {\it (i)}. In this case, the limiting distribution for non-zero estimators is the distribution of the estimator  $\widetilde \beta_{m,j}$ which is calculated from the source observations. Note that in case {\it (iii)}, the random variable $n\lambda_n^{-1} (\widehat \beta_{m,j} -\beta^*_j)$ could converge  to a nonzero constant for $j \in {\cal S}^*$. Therefore, the estimator is asymptotically biased. Furthermore, the convergence rate in case \textit{(iii)} is slower than that in case \textit{(ii)}. In order to obtain an estimator that converges rapidly, is asymptotically unbiased, and converges to the same distribution as the quantile estimator for nonzero coefficients, case {\it (i)} is recommended. In that case, we must choose $m \gg n$ so that, under case {\it (i)}, the estimator $\widehat \eb_n$ has the fastest convergence rate.\\		
The following theorem shows that the adaptive transfer LASSO quantile estimator  satisfies the sparsity property. To prove this, let the set 
$$\widehat {\cal S}_{m,n} \equiv \{j \in \{1, \ldots, p\};\widehat \beta_{n;j} \neq 0\},$$
which is an estimator for ${{\cal S}^*}$.
\begin{theorem}
	\label{TheoremTF_4.3}
Under assumptions (A1), (A2) together with $\max_{m+1 \leqslant i \leqslant m+n} \| \eX_i \|_2$  bounded, we have $\PP[\widehat{\cal S}_{m,n}={\cal S}^*] {\underset{(m,n) \rightarrow \infty}{\longrightarrow}}  1$.
\end{theorem}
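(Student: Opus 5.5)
We show separately that true nonzeros are kept and that true zeros are removed, each with probability tending to one. Sparsity is taken under the tuning conditions of Theorem \ref{TheoremTF 4.1}. In each of cases (i)--(iii) that theorem gives $\widehat\eb_n-\eb^*\to \oo$ in probability, at rate $m^{-1/2}$, $n^{-1/2}$ or $\lambda_n/n$ respectively. Since $\min_{j\in{\cal S}^*}|\beta^*_j|>0$, it follows that $\PP[\widehat\beta_{n,j}\neq 0 \ \text{for all } j\in{\cal S}^*]\to 1$, i.e. ${\cal S}^*\subseteq \widehat{\cal S}_{m,n}$ with probability tending to one. The remaining and main task is to show $\PP[j\in\widehat{\cal S}_{m,n}]\to 0$ for every $j\in{{\cal S}^*}^c$.

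Fix $j\in{{\cal S}^*}^c$ and suppose $\widehat\beta_{n,j}\neq 0$. The objective $Z_{m,n}$ in (\ref{relZ}) is convex, so $\oo$ belongs to its subdifferential at $\widehat\eb_n$. The $j$-th coordinate of this condition reads
\[
\sum_{i=m+1}^{m+n} X_{ji}\big(\tau-\e1_{Y_i-\eX_i^\top\widehat\eb_n<0}\big) + h_{n}
= \lambda_n v_{m,j}\sgn(\widehat\beta_{n,j}) + \eta_n \omega_{m,j}\, s_j,
\]
where $s_j\in[-1,1]$ is a subgradient of $|\beta_j-\widetilde\beta_{m,j}|$. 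The term $h_n$ collects the possible subgradient slack of the check function at the residuals that are exactly zero. Since there are at most $p$ interpolated points and $\max_i\|\eX_i\|_2$ is bounded, $|h_n|\le c\,p$.

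Next I bound the left-hand side. Writing $\e1_{Y_i-\eX_i^\top\widehat\eb_n<0}=\e1_{\varepsilon_i<\eX_i^\top(\widehat\eb_n-\eb^*)}$, I split the sum into the centred term $\sum_i X_{ji}{\cal D}_i=O_\PP(n^{1/2})$ (CLT, as for $\bfW$) plus a drift term. By the stochastic equicontinuity argument behind (\ref{eq6}) (Theorem 1 of \cite{Ciuperca.2019} and Theorem 2.1 of \cite{Zou.Yuan.08}), the drift equals $n f(0)\,\eU_{j\cdot}(\widehat\eb_n-\eb^*)(1+o_\PP(1))+o_\PP(n^{1/2})$. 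Hence the left-hand side is $O_\PP(n^{1/2}+n\|\widehat\eb_n-\eb^*\|_2)$, that is $O_\PP(n^{1/2}+n m^{-1/2})$, $O_\PP(n^{1/2})$ or $O_\PP(\lambda_n)$ according to the case.

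For the right-hand side, since $\beta^*_j=0$ we have $\widetilde\beta_{m,j}=m^{-1/2}z_{m,j}$ with $z_{m,j}=O_\PP(1)$ and $|z_{m,j}|$ bounded away from $0$ in probability, because its limit is a nondegenerate Gaussian. Thus $v_{m,j}$ is of exact order $m^{\gamma_1/2}$, and $|\eta_n\omega_{m,j}s_j|\le \eta_n m^{-\gamma_2/2}|z_{m,j}|^{\gamma_2}$. The right-hand side is therefore at least $\lambda_n m^{\gamma_1/2}|z_{m,j}|^{-\gamma_1}-\eta_n m^{-\gamma_2/2}O_\PP(1)$. The condition $m^{-(\gamma_1+\gamma_2)/2}\eta_n\lambda_n^{-1}\to0$ in (i) makes the second term negligible, and in (iii) $\eta_n\ll\lambda_n$ does the same. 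Since $m^{\gamma_1/2}n^{-1/2}\lambda_n\to\infty$ (and $m^{\gamma_1/2}\lambda_n\gg\lambda_n$ in (iii)), the right-hand side dominates the left-hand side. The identity then fails with probability tending to one, which gives $\PP[\widehat\beta_{n,j}\neq0]\to0$.

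I expect the main obstacle to be comparing the left-hand side in case (i), where it may grow like $n m^{-1/2}$, against $\lambda_n m^{\gamma_1/2}$. I would handle it using $n/m\to r_0$, so that $nm^{-1/2}=O(n^{1/2})$ and the assumed divergence $m^{\gamma_1/2}n^{-1/2}\lambda_n\to\infty$ suffices. A second delicate point is the uniform control of the drift term over the random $\widehat\eb_n$. For this I would use the boundedness of $\max_i\|\eX_i\|_2$, which makes the indicator class manageable, together with the rate from Theorem \ref{TheoremTF 4.1}.
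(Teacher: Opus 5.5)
Your proposal is correct and follows essentially the same route as the paper's proof. Consistency from Theorem \ref{TheoremTF 4.1} gives ${\cal S}^*\subseteq\widehat{\cal S}_{m,n}$, and for $j\in{{\cal S}^*}^c$ the KKT identity fails because its left-hand side is $O_\PP(n^{1/2}+n\|\widehat\eb_n-\eb^*\|_2)$ (the paper's $T_1+T_2$), while $\lambda_n v_{m,j}$ dominates both this and $\eta_n\omega_{m,j}$ in each of cases (i)--(iii). Your version is in fact somewhat more careful than the paper's: it accounts for the subgradient slack at zero residuals and for uniformity over the random $\widehat\eb_n$, whereas the paper computes the mean and variance of $T_2$ as if $\eu$ were fixed.
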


Note that if we had directly applied the adaptive LASSO  quantile method to both the target and source data: $Y_i=\eX_i^\top \eb +\varepsilon_i$ for $i=1, \cdots, m+n$, in the case $m \geq n$, for $m^{-1/2}\lambda_n \rightarrow 0$, $m^{\gamma_1/2 - 1} \lambda_{m+n} \rightarrow \infty$, the estimator 
\begin{equation}
	\begin{split}
\widehat \eb_{m+n}^{aqLASSO} \equiv \argmin_{\eb \in \R^p} \bigg(\sum_{i=1}^{m+n} \rho_\tau(Y_i-\eX_i^\top \eb) \\
+\lambda_{m+n} \sum^p_{j=1} \widetilde v_{m+n,j}|\beta_j|\bigg),
\end{split}
\label{eq_aqLASSOm+n}
\end{equation}
 with  $(\widetilde v_{m+n,j})_{1 \leqslant j \leqslant p} \equiv |\argmin_{\eb \in \R^p}\sum_{i=1}^{m+n} \rho_\tau(Y_i-\eX_i^\top \eb)|^{-\gamma_1}$, would have a convergence rate of the order $(m+n)^{-1/2}$, that is, $m^{-1/2}$, the same as in case \textit{(i)} obtained in Theorem \ref{TheoremTF 4.1}. The adaptive LASSO quantile  estimator $\widehat \eb_{m+n}^{aqLASSO}$ satisfies the sparsity property and the asymptotic  normality: $m^{-1/2}(\widehat \eb_{m+n}^{aqLASSO} -\eb^*) \overset{\cal L} {\underset{m+n \rightarrow \infty}{\longrightarrow}}  {\cal N}(\oo_p, \tau (1-\tau) f^{-2}(0)\eU^{-1}_{{\cal S}^*})$, that is, the same law obtained in our case in Theorem \ref{TheoremTF 4.1}(i). For more details, see the article by \cite{Ciuperca-16}. In Section \ref{section_numeric}, we will compare the estimators $\widehat \eb_{m+n}^{aqLASSO}$ and $\widehat \eb_{n}$ through   simulations.\\
 
Note that  \cite{Huang.Wang.Wu.22} also propose and study a transfer  learning procedure within the framework of high-dimensional quantile regression models. However, the transfer estimator  differs from ours because  it uses an empirical quantile loss  only for the target data, using this to anchor an initial $L_1$-penalized quantile regression  estimator relative to the target by adjusting the contrast.   The asymptotic properties shown are the  convergence rate  and  the weak oracle property  of the variable selection procedure. On the other hand, compared to  the naive estimator with only the target data, the transfer learning LASSO quantile estimator defined by \cite{Jin.Yan.Aseltine.Chen.24}  achieves a much lower error  rate as a function of the sample sizes. However, the asymptotic distribution of the transfer learning LASSO estimators  studied in the papers \cite{Huang.Wang.Wu.22} and  \cite{Jin.Yan.Aseltine.Chen.24} is not examined. 
 
 \section{Algorithm}
\label{section_alg}
We now propose a transfer learning  algorithm based on the subgradient method to numerically compute $\widehat \eb_n$.\\
To derive the algorithm, we begin by reviewing the KKT conditions given by (\ref{eq7a}), (\ref{eq7b}), (\ref{eq7c}), which are found in  Section \ref{section_proofs} in the proof of Theorem \ref{TheoremTF_4.3}.
These three relations can generally be written as follows:
\begin{multline}
\label{ea1} 
\tau \sum^{m+n}_{i=m+1} X_{ji}-\sum^{m+n}_{i=m+1} X_{ji} \e1_{Y_i < \eX_i^\top   \eb}\\ -\lambda_n v_{m,j} s_{1j} -\eta_n \omega_{m,j} s_{2j}=0,
\end{multline}
with
\[
s_{1j} \equiv 	\left\{
\begin{array}{ll}
	 \frac{\beta_j}{|\beta_j|}, &\quad \text{if} \quad \beta_j \neq 0,\\
	\leq 1,& \quad \text{if} \quad \beta_j = 0 
\end{array}
\right.
\]
and
\[
s_{2j} \equiv 	\left\{
\begin{array}{ll}
	\frac{\beta_j -\widetilde \beta_j}{|\beta_j-\widetilde \beta_j|}, &\quad \text{if} \quad \beta_j \neq \widetilde \beta_{m,j},\\
	\leq 1,& \quad \text{if} \quad \beta_j = \widetilde \beta_{m,j} .
\end{array}
\right.
\]
Let $ \eX_{-j,i}$ be the vector $\eX_i$ without $X_{ji}$ and $\eb_{-j}$ be the vector $\eb$ without $\beta_j$. \\
If $\beta_j=0$, then relation (\ref{ea1}) becomes 
\begin{multline*}
	\label{ea1b}
	\tau \sum^{m+n}_{i=m+1} X_{ji}-\sum^{m+n}_{i=m+1} X_{ji} \e1_{Y_i <  \eX_{-j,i}^\top   \eb_{-j}} \\
	-\lambda_n v_{m,j} s_{1j} -\eta_n \omega_{m,j} s_{2j}=0,
\end{multline*}
from which we obtain
\begin{equation*}
	\begin{split}
		\label{ea2}
s_{1j}&=\frac{1}{\lambda_n v_{m,j}} \bigg( 	 \sum^{m+n}_{i=m+1} X_{ji}\big( \tau-  \e1_{Y_i <  \eX_{-j,i}^\top   \eb_{-j}}\big) \\ 
& \qquad -\eta_n \omega_{m,j} s_{2j}\bigg),
	\end{split}
\end{equation*}
with $s_{2j}=\sgn(\widetilde \beta_{m,j})$. Then, if $|s_{1j}|< 1$ then we take $\beta_j^{(k)}=0$ and 
\[
\begin{split}
&\frac{\beta_j}{|\beta_j|}=\\
&\frac{1}{\lambda_n v_{m,j}} \big( 	 \sum^{m+n}_{i=m+1} X_{ji}\big( \tau-  \e1_{Y_i <  \eX_{-j,i}^\top   \eb_{-j}}\big)  -\eta_n \omega_{m,j} s_{2j}\big).
\end{split}
\]
If $\beta_j \neq 0$ and  $\beta^{(k-1)}_j =\widetilde \beta_{m,j}$ then
\begin{equation*}\begin{split}
	&s_{2j}=\\
	&\frac{1}{\eta_n \omega_{m,j}} \big( 	 \sum^{m+n}_{i=m+1} X_{ji}\big( \tau-  \e1_{Y_i <  \eX_{-j,i}^\top   \eb_{-j}}\big)  -\lambda_n v_{m,j} s_{1j}\big).
\end{split}
\end{equation*}
If $|s_{2j}| <1$ then $\beta^{(k)}=\widetilde \beta_{m,j}$.\\
If $\beta_j \neq 0$ then $|s_{1j}| \geq 1$. If $\beta^{(k-1)}_j \neq \widetilde  \beta_j$ then
\begin{align*}
\frac{\beta_j}{|\beta_j|}=&\frac{1}{\lambda_n v_{m,j}} \bigg( 	 \sum^{m+n}_{i=m+1} X_{ji}\big( \tau-  \e1_{Y_i <  \eX_{-j,i}^\top   \eb_{-j}}\big) \\& \qquad  -\eta_n \omega_{m,j}  \frac{\beta_j-\widetilde \beta_{m,j}}{|\beta_j-\widetilde \beta_{m,j}|}\bigg),
\end{align*}
from which we obtain
\[
\begin{split}
&	\eta_n \omega_{m,j}  \frac{\beta_j-\widetilde \beta_{m,j}}{|\beta_j-\widetilde \beta_{m,j}|}\\
& =  \sum^{m+n}_{i=m+1} X_{ji}\big( \tau-  \e1_{Y_i <  \eX_{-j,i}^\top   \eb_{-j}}\big) -\frac{n \lambda_n \beta_j}{|\beta_j|}.
\end{split}
\]
From this last equation, we can derive $\beta_j$:
	\begin{align*}
&\sum^{m+n}_{i=m+1} X_{ji}\big( \tau-  \e1_{Y_i <  \eX_{-j,i}^\top   \eb_{-j}}\big)=\lambda_n v_{m,j} \frac{\beta_j}{|\beta_j|}\\
&\qquad +\eta_n \omega_{m,j} \frac{\beta_j -\widetilde \beta_{m,j}}{ |\beta_j -\widetilde \beta_{m,j}|},
\end{align*}
from which we can write:
\begin{align*}
&|\beta_j|  |\beta_j -\widetilde \beta_{m,j}| \sum^{m+n}_{i=m+1} X_{ji}\big( \tau-  \e1_{Y_i <  \eX_{-j,i}^\top   \eb_{-j}}\big)\\
&=\lambda_n  v_{m,j} \beta_j |\beta_j -\widetilde \beta_{m,j}|+\eta_n\omega_{m,j} \big(\beta_j - \widetilde \beta_{m,j}\big)|\beta_j|,
\end{align*}
that is to say
\begin{align*}
&\beta_j  \big( \lambda_n v_{m,j} |\beta_j -\widetilde \beta_{m,j}| +\eta_n \omega_{m,j} |\beta_j|\big)\\
&=|\beta_j|  |\beta_j -\widetilde \beta_{m,j}|  \sum^{m+n}_{i=m+1} X_{ji}\big( \tau-  \e1_{Y_i <  \eX_{-j,i}^\top   \eb_{-j}}\big)\\ &\qquad +\eta_n \omega_{m,j} \widetilde \beta_{m,j} |\beta_j|,
\end{align*}
which implies:
\begin{equation}
	\label{ea3}
	\begin{split}
		&		\beta_j=\
		\big(|\beta_j|  |\beta_j -\widetilde \beta_{m,j}| \sum^{m+n}_{i=m+1} X_{ji}\big( \tau-  \e1_{Y_i <  \eX_{-j,i}^\top   \eb_{-j}}\big) \\
		&+\eta_n \omega_{m,j} \widetilde \beta_{m,j} |\beta_j|\big)/\big(\lambda_n v_{m,j} |\beta_j -\widetilde \beta_{m,j}| +\eta_n \omega_{m,j} |\beta_j|\big).
	\end{split}
\end{equation}
After examining all of these situations, we are now able to propose the following algorithm:
\begin{algorithm}[H]
\algorithmicrequire{ The  source data $(Y_i,\eX_i)_{1 \leqslant i \leqslant m}$and the target data 
		 $(Y_i,\eX_i)_{m+1 \leqslant i \leqslant m+n}$.}\\
\algorithmicensure{ Calculate $\widehat \eb_{n}$ using the following steps.}
	\begin{enumerate}
		\item  The estimator  $\widetilde \eb_m$ is calculate on the source data.
		\item\textit{Step 0}: We start with a $\eb^{(0)}$. 
		\item \textit{Step $k$},  calculates the estimate $\eb^{(k)}$ given the value $\eb^{(k-1)}$ of step \textit{(k-1)}. For $j$ from 1 to $p$, we perform the following:
		\begin{enumerate}
			\item Calculate $s_{1j}$. If $|s_{1j}|<1$ then we take $\beta^{(k)}_j=0$. In other words, if
			\begin{align*}
				\frac{1}{\lambda_n v_{m,j}} \bigg|	 \sum^{m+n}_{i=m+1} X_{ji}\big( \tau-  \e1_{Y_i <  \eX_{-j,i}^\top   \eb_{-j}^{(k-1)}}\big) \\
				-\eta_n \omega_{m,j} \sgn(\beta_j^{(k-1)}- \widetilde \beta_{m,j})\bigg|<1,
			\end{align*}
			then $\beta^{(k)}_j=0$.\\
			If $s_{1j} \geq 1$  then we execute (b) when $|s_{2j}|<1$ or (c) when $|s_{2j}|\geq 1$.
			\item Calculate $s_{2j}$. If $|s_{2j}|<1$ then we take $\beta^{(k)}_j=\widetilde \beta_{m,j}$. In other words, if 
			\begin{align*}
				\frac{1}{\eta_n \omega_{m,j}} \bigg|	 \sum^{m+n}_{i=m+1} X_{ji}\big( \tau-  \e1_{Y_i <  \eX_{-j,i}^\top   \eb_{-j}^{(k-1)}}\big) \\
				-\lambda_n v_{m,j} \sgn(\beta_{m,j}^{(k-1)})\bigg|<1,
			\end{align*}
			then $\beta^{(k)}_j=\widetilde \beta_{m,j}$.
			\item Otherwise, calculate $|\beta_j|$ using equation (\ref{ea3}). 
			We will replace only $|\beta_j|$ with\\
			$
			b_j \equiv \big(\big| |\beta_j^{(k-1)}|   |\beta_j^{(k-1)} -\widetilde \beta_{m,j}| \sum^{m+n}_{i=m+1} X_{ji}\big( \tau-  \e1_{Y_i <  \eX_{-j,i}^\top   \eb_{-j}^{(k-1)}}\big) +\eta_n \omega_{m,j} \widetilde \beta_{m,j} |\beta_j^{(k-1)}|\big|\big)/
			\big(\lambda_n v_{m,j} |\beta_j^{(k-1)} -\widetilde \beta_{m,j}| +\eta_n \omega_{m,j} |\beta_j^{(k-1)}|\big).	$\\
			Hence\\
			$
			\beta_j^{(k)}=\big( b_j  |\beta_j^{(k-1)} -\widetilde \beta_{m,j}| \sum^{m+n}_{i=m+1} X_{ji}\big( \tau-  \e1_{Y_i <  \eX_{-j,i}^\top   \eb_{-j}^{(k-1)}}\big) +\eta_n \omega_{m,j} \widetilde \beta_{m,j} b_j\big)/ \big(\lambda_n v_{m,j} |\beta_j^{(k-1)} -\widetilde \beta_{m,j}| +\eta_n \omega_{m,j} b_j\big)$.
		\end{enumerate}
		\item \textit{Stop}: when $\| \eb^{(k)} -\eb^{(k-1)}\|_2 \leq \epsilon$, with $\epsilon$ a threshold set in advance.	
	\end{enumerate}
		\caption{}
\end{algorithm}
We note that, based on the results of \cite{Tseng.01}, the algorithm is convergent.\\

 An article that proposes an algorithm for calculating a transfer LASSO estimator is the one by \cite{Takada.Fujisawa.20}, but it is  completely different  from ours because the regression method used was ordinary least squares regression. Again, for a loss differentiable function, \cite{Tian.Feng.23} introduce a general transfer learning algorithm on GLMs to transfer all sources in a given index set, while \cite{Liu.Song.25} propose a  two-step transfer algorithm based on expectile regression model.  For high-dimensional quantile regression, \cite{Qiao-He-Zhou.24} propose transfer learning algorithms in which, to counter non-smoothness and non convexity of the quantile loss, they employ the convolution-type smoothed quantile regression. \cite{Zhang.Zhu.25} also proposes a  transfer learning algorithm based on convolution smoothing of the regression quantile.
 \section{Numerical study}
 \label{section_numeric}
 In this section, we will conduct a numerical study of $\widehat \eb_{n}$ defined by (\ref{eq_cible}) and calculated using the algorithm proposed in Section \ref{section_alg}.  Unless otherwise noted, the results presented here are obtained from 200 Monte Carlo replications.  For the algorithm, the value of $\epsilon$ is $10^{-3}$.  \\
For  $i=1,\cdots, m, m+1, \cdots, m+n$, we consider the design  $X_{ji} \sim {\cal N}(0,1)$ for $j \in \{1, \cdots, p\}\setminus \{3, 4, 5, 8\}$, $X_{3i}\sim {\cal N}(2,1)$, $X_{4i}\sim {\cal N}(4,1)$, $X_{5i}\sim {\cal N}(1,1)$, $X_{8i}\sim {\cal N}(1,2)$ and are iid. In most of the following simulations, the errors are either ${\cal N}(0,1)$ or $0.2 {\cal N}(0,1) +{\cal N}^2(0,1)$. For each distribution of $\varepsilon$, we calculate the quantile index estimation using $\widehat \tau_n =(m+n)^{-1} \sum^{m+n}_{i=1} \varepsilon_i \e1_{\varepsilon_i <0}$.\\
The values obtained for  $\widehat \tau_n$ are  0.50 and 0.14 when $ \varepsilon \sim {\cal N}(0,1)$ or $0.2 {\cal N}(0,1) +{\cal N}^2(0,1)$,  respectively.  The initial value $\eb^{(0)}$ taken for the algorithm is a $p$-dimensional vector with all components equal to 1, but you can use any other value; the algorithm will always converge to same result.
The simulations,  performed on a computer Inter(R), Core (CM), 1.6 Ghz, were performed using the R software, with quantile estimates calculated using the \textit{rq} function from the \textit{quantreg} package. The adaptive transfer LASSO quantile estimation  was implemented by applying the algorithm presented in Section \ref{section_alg}.
 \subsection{Comparison of $\widehat \eb_n$ with adaptive LASSO quantile  estimators}
 \label{subsec_acomp_avec_LASSO}
  The adaptive transfer LASSO quantile estimator $\widehat \eb_{n}$ is compared with two adaptive  LASSO quantile estimators calculated as follows:
 \begin{itemize}
 	\item  on both the target and source data as the minimizer of the random process:
 	\begin{multline*}
 	\sum^{m+n}_{i=1} \rho_\tau(Y_i-\eX_i^\top \eb)\\+
 	(m+n)^{2/5} \sum^p_{j=1}   | \widetilde\beta_{m+n,j}|^{-1.225} |\beta_j| ,
 	\end{multline*}
 	with $\widetilde \eb_{m+n}\equiv \argmin_{\eb \in \R^p}\sum^{m+n}_{i=1} \rho_\tau(Y_i-\eX_i^\top \eb) $. This estimator is denoted by $\widehat \eb^{aqLASSO}_{m+n}$ and corresponds to  estimator (\ref{eq_aqLASSOm+n}).
 	\item  only on the target data, such as the minimizer of the random process:
 	\[
 	\sum^{m+n}_{i=m+1} \rho_\tau(Y_i-\eX_i^\top \eb)+
 	n^{2/5} \sum^p_{j=1}   | \widetilde\beta_{n,j}|^{-1.225} |\beta_j| ,
 	\]
 	with $\widetilde \eb_{n}\equiv \argmin_{\eb \in \R^p}\sum^{m+n}_{i=m+1} \rho_\tau(Y_i-\eX_i^\top \eb) $. This estimator is denoted $\widehat \eb^{aqLASSO}_n$.
 \end{itemize} 
 These two adaptive  LASSO quantile estimations were calculated using the \textit{rq} function from the R package \textit{quantreg}.\\
Unless otherwise stated, for all simulations presented in this subsection, we assume $m=n^2$ and are under the conditions {\it (i)} of Theorem \ref{TheoremTF 4.1}. Thus, for the estimate $\widehat \eb_{n}$ of equation (\ref{eq1_cible}), we take: $\gamma_1=2$ and $\gamma_2=1.25$, $\lambda_n=n^{1/2}$ and $\eta_n=n^{1/2+max(\gamma_1+0.001,(\gamma_1+\gamma_2)/2)}$. The number  $p$ of explanatory variables will vary, but we will always keep three non-zero coefficients: $|{\cal S}^*|=3$, more precisely ${\cal S}^*=\{1, 2, 3\}$.
For 200 Monte Carlo simulations, in Tables \ref{Tabl1}, \ref{Tabl2}, \ref{Tabl3} and in Figures \ref{fig_betapetit_p10}, \ref{fig_betapetit_p100}, and \ref{fig_betagrand_p10}, we have plotted:
\begin{itemize}
	\item  the {\it estimation error} calculated for each of the three estimators as  $\frac{\| (\widehat \eb -\eb^*)_{{\cal S}^*} \|_2}{ \|   \eb^* \|_2}$, 
	\item  the {\it false zero rate} by $\frac{| \{ j \in {\cal S}^*; \widehat \beta_j = 0\}|}{ |{\cal S}^*|}$ 
	\item and the {\it false non-zero rate} calculated by  $\frac{| \{ j \in {{\cal S}^*}^c; \widehat \beta_j \neq 0\}|}{ |{{\cal S}^*}^c|}$. 
\end{itemize}
For Figures \ref{fig_betapetit_p10}, \ref{fig_betapetit_p100}, and \ref{fig_betagrand_p10}, the model errors follow a distribution of the form $0.2 {\cal N}(0,1) +{\cal N}^2(0,1)$. In the case where all nonzero coefficients are close to zero, if $(\varepsilon_i)_{1 \leqslant i \leqslant m+n} \sim{\cal N}(0,1)$,  based on Table \ref{Tabl1}, we deduce that  the estimators $\widehat \eb_{n}$, $\widehat \eb^{aqLASSO}_{m+n}$ have roughly the same accuracy and level of selection of significant and non-significant variables. If $ \varepsilon \sim 0.2 {\cal N}(0,1) +{\cal N}^2(0,1)$, i.e., a asymmetric distribution, then $\widehat \eb_n$ is less biased than $\widehat \eb^{aqLASSO}_{m+n}$ regardless of whether $p$ is small or large, and whether it is of the same order of magnitude as $n$ or not (see Figures \ref{fig_betapetit_p10} and \ref{fig_betapetit_p100}). On the other hand, based on Table \ref{Tabl2} and Figure \ref{fig_betagrand_p10}, we can conclude that if the nonzero parameters are further from 0, then these two estimators are similar regardless of whether the errors follow  ${\cal N}(0,1)$ or $0.2 {\cal N}(0,1) +{\cal N}^2(0,1)$, except for $p=20$, $n=50$, and $\varepsilon \sim {\cal N}(0,1)$. The estimations $\widehat \eb^{aqLASSO}_{n}$, since they are calculated on a smaller number of observations than $\widehat \eb^{aqLASSO}_{m+n}$, may perform less well for $\varepsilon \sim {\cal N}(0,1)$ when $n$ and $p$ are of the same order of magnitude.
\begin{table*} 
		\centering 
					{\scriptsize
		\begin{tabular}{cccccc}\hline 
			 $p$ &$n$   &estimator &  $\frac{\| (\widehat \eb -\eb^*)_{{\cal S}^*} \|_2}{ \|   \eb^* \|_2}$  &    $\frac{| \{ j \in {\cal S}^*; \widehat \beta_j = 0\}|}{ |{\cal S}^*|}$  &   $\frac{| \{ j \in {{\cal S}^*}^c; \widehat \beta_j \neq 0\}|}{ |{{\cal S}^*}^c|}$\\ [4pt]   \hline \toprule
				5	  & 20   & $\widehat \eb_{n}$  &  0.47 & 0.33 &0.17  \\[4pt]
			 &  & $\widehat \eb^{aqLASSO}_{m+n}$ & 0.45&0.44 & 0.04\\[4pt]
			&  & $\widehat \eb^{aqLASSO}_{n}$ &0.94 & 0.56 & 0.20 \\[4pt] \cmidrule{2-6} 
		& 40  & $\widehat \eb_{n}$  &  0.24 & 0.07 &0.09 \\[4pt]
	& &  $\widehat \eb^{aqLASSO}_{m+n}$ & 0.30& 0.05 & 0.002\\[4pt]
	& &  $\widehat \eb^{aqLASSO}_{n}$ &0.76 & 0.59 & 0.11 \\ [4pt] \cmidrule{2-6}
	  & 100  & $\widehat \eb_{n}$  &  0.08 & 0.01 &0.04 \\[4pt]
	& &  $\widehat \eb^{aqLASSO}_{m+n}$ & 0.09 &0 & 0\\[4pt]
	& &  $\widehat \eb^{aqLASSO}_{n}$ &0.46 & 0.53 & 0.06 \\ [4pt]  \cmidrule{2-6}
	& 1000  & $\widehat \eb_{n}$  &  0.006 & 0 &0 \\[4pt]
	& &  $\widehat \eb^{aqLASSO}_{m+n}$ & 0.006&0 & 0\\ [4pt]
	& & $\widehat \eb^{aqLASSO}_{n}$ &0.35 & 0.18 & 0.01 \\[4pt] \cmidrule{1-6}
	25	& 50 & $\widehat \eb_{n}$  &  0.43 & 0.29 &0.09 \\ [4pt]
	& &  $\widehat \eb^{aqLASSO}_{m+n}$ & 0.25&0 & 0\\ [4pt]
	& &  $\widehat \eb^{aqLASSO}_{n}$ &0.81 & 0.54 & 0.03 \\ [4pt] \cmidrule{2-6}
 & 250 & $\widehat \eb_{n}$  &  0.09 & 0.04 &0.02\\[4pt]
	& &  $\widehat \eb^{aqLASSO}_{m+n}$ & 0.03&0 & 0\\ [4pt]
	& &  $\widehat \eb^{aqLASSO}_{n}$ &0.50 & 0.49 & 0.001 \\ [4pt]\hline
						\end{tabular}
}
	\caption{ Comparison study of $\widehat \eb_{n}$, $\widehat \eb^{aqLASSO}_{m+n}$, $\widehat \eb^{aqLASSO}_{n}$, when   $\beta_1^*=0.1$, $\beta^*_2=0.2$, $\beta^*_3=-0.1$, $\varepsilon \sim {\cal N}(0,1)$.  } 
\label{Tabl1} 
\end{table*}
\begin{table*} 
\begin{center}
				{\scriptsize
		\begin{tabular}{BBBBBBB}\hline
			 $p$& $n$ & $\varepsilon$  & estimator &  $\frac{\| (\widehat \eb -\eb^*)_{{\cal S}^*} \|_2}{ \|   \eb^* \|_2}$  &    $\frac{| \{ j \in {\cal S}^*; \widehat \beta_j = 0\}|}{ |{\cal S}^*|}$  &  $\frac{| \{ j \in {{\cal S}^*}^c; \widehat \beta_j \neq 0\}|}{ |{{\cal S}^*}^c|}$\\ [4pt] \hline  \toprule
			  & 20& ${\cal N}(0,1)$ & $\widehat \eb_{n}$  &  0.11 & 0.08 &0.12 \\[4pt]
			& & & $\widehat \eb^{aqLASSO}_{m+n}$& 0.03&0  & 0\\[4pt]
			& & & $\widehat \eb^{aqLASSO}_{n}$ &0.18 & 0.0006 & 0.13 \\[4pt] \cmidrule{2-7}
		 5 & 100 & ${\cal N}(0,1)$ & $\widehat \eb_{n}$  &  0.007 & 0 &0.02 \\[4pt]
			& & & $\widehat \eb^{aqLASSO}_{m+n}$& 0.007&0 & 0\\[4pt]
			& & & $\widehat \eb^{aqLASSO}_{n}$ &0.07 & 0 & 0.03 \\[4pt] \cmidrule{1-7}
		 & 50 & ${\cal N}(0,1)$ & $\widehat \eb_{n}$  &  0.50 & 0.58 &0.09 \\[4pt]
			& & & $\widehat \eb^{aqLASSO}_{m+n}$& 0.01 &0 & 0\\[4pt]
			& & & $\widehat \eb^{aqLASSO}_{n}$ &0.09 & 0. & 0.02 \\[4pt] \cmidrule{2-7}
		25	 & 250 & ${\cal N}(0,1)$ & $\widehat \eb_{n}$  &  0.05 & 0.05  &0.02\\[4pt]
			& & & $\widehat \eb^{aqLASSO}_{m+n}$ & 0.002&0 & 0\\[4pt]
			& & & $\widehat \eb^{aqLASSO}_{n}$ &0.04 & 0 & 0  \\[4pt]   \hline \hline
				 & 20 & $0.2 {\cal N}(0,1) +{\cal N}^2(0,1)$ & $\widehat \eb_{n}$  &  0.01 & 0  &0  \\[4pt]
			& & & $\widehat \eb^{aqLASSO}_{m+n}$ & 0.01&0  & 0\\[4pt]
			& & & $\widehat \eb^{aqLASSO}_{n}$ &0.07 & 0  & 0.01 \\ [4pt]\cmidrule{2-7}
			5 & 100 & $0.2 {\cal N}(0,1) +{\cal N}^2(0,1)$ &$\widehat \eb_{n}$ &  0.002 & 0 &0 \\[4pt]
			& & & $\widehat \eb^{aqLASSO}_{m+n}$ & 0.002&0 & 0\\[4pt]
			& & & $\widehat \eb^{aqLASSO}_{n}$&0.02 & 0 & 0 \\ [4pt]\hline
			 & 50 & $0.2 {\cal N}(0,1) +{\cal N}^2(0,1)$ & $\widehat \eb_{n}$  &  0.006 & 0 &0 \\[4pt]
			& & & $\widehat \eb^{aqLASSO}_{m+n}$ & 0.005&0 & 0\\[4pt]
			& & & $\widehat \eb^{aqLASSO}_{n}$ &0.04 & 0. & 0.005 \\[4pt]\cmidrule{2-7}
			25 & 250 & $0.2 {\cal N}(0,1) +{\cal N}^2(0,1)$ & $\widehat \eb_{n}$  &  0.001 & 0  &0\\[4pt]
			& & & $\widehat \eb^{aqLASSO}_{m+n}$ & 0.0009&0 & 0\\[4pt]
			& & & $\widehat \eb^{aqLASSO}_{n}$ &0.01 & 0 & 0 \\ [4pt]\hline 
					\end{tabular}
	}
\end{center}
	\caption{Comparison study of $\widehat \eb_{n}$, $\widehat \eb^{aqLASSO}_{m+n}$, $\widehat \eb^{aqLASSO}_{n}$, when   $\beta_1^*=1$, $\beta^*_2=2$, $\beta^*_3=-1$,  $m=n^2$. } 
\label{Tabl2} 
\end{table*}
Table \ref{Tabl3} highlights another advantage of the transfer learning method: the runtime required to estimate the parameters. This advantage is particularly useful when $p$ is large, i.e., for high-dimensional models.
\begin{table*} 
		\begin{center}
				{\scriptsize
			\begin{tabular}{cccc}\hline
			 $p$ & $n$ & method &  time (sec)\\[4pt] \hline \toprule  
		10 & 50 & $\widehat \eb_{n}$  &  0.0169  \\[4pt]
		&  & $\widehat \eb^{aqLASSO}_{m+n}$ &  0.034 \\[4pt]
		& & $\widehat \eb^{aqLASSO}_{n}$ & 0.003  \\[4pt] \cmidrule{2-4}	
		 & 250 & $\widehat \eb_{n}$   &  2.64  \\[4pt]
		&  & $\widehat \eb^{aqLASSO}_{m+n}$ &  3 \\[4pt]
		& & $\widehat \eb^{aqLASSO}_{n}$ & 0.006  \\ [4pt]\cmidrule{1-4}		
			25 & 250 &$\widehat \eb_{n}$   &  3.26  \\[4pt]
		&  & $\widehat \eb^{aqLASSO}_{m+n}$&  5.70 \\[4pt]
		& & $\widehat \eb^{aqLASSO}_{n}$ & 0.007  \\[4pt] \cmidrule{1-4}
			200 & 250 & $\widehat \eb_{n}$   &  95  \\[4pt]
		&  & $\widehat \eb^{aqLASSO}_{m+n}$ &  150 \\[4pt]
		& & $\widehat \eb^{aqLASSO}_{n}$ & 0.067  \\[4pt] \hline
									\end{tabular}
								}
		\end{center}
		\caption{ Execution time for the each estimation,  when  $\beta_1^*=1$, $\beta^*_2=2$, $\beta^*_3=-1$,  $\varepsilon \sim 0.2 {\cal N}(0,1) +{\cal N}^2(0,1)$. } 
	\label{Tabl3} 
\end{table*}

 \begin{figure*}[h!] 
 	\begin{tabular}{ccc}
 		\includegraphics[width=0.3\linewidth,height=4.5cm]{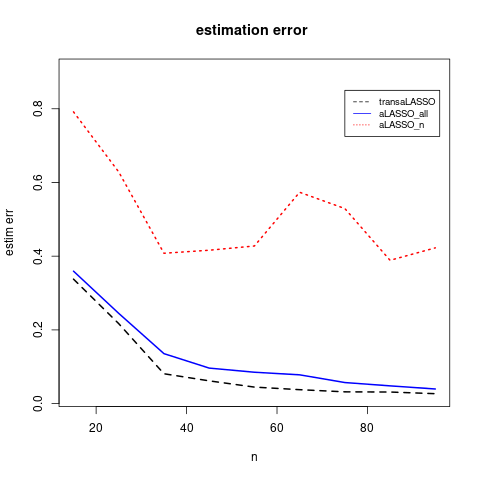} &
 		\includegraphics[width=0.3\linewidth,height=4.5cm]{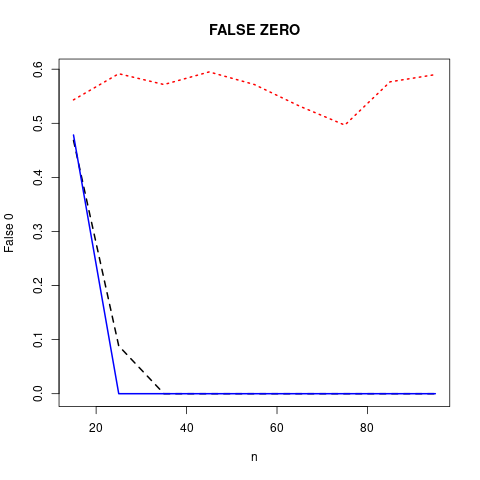}&
 		\includegraphics[width=0.3\linewidth,height=4.5cm]{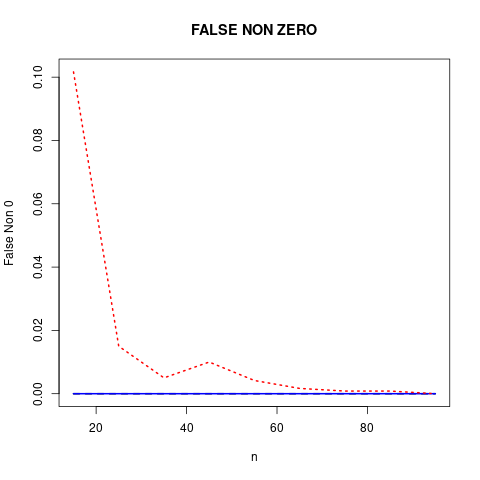} \\
 		{\small (a) Evolution of estimation error.} &
 		{\small (b)  Evolution of false  zeros.} &
 		{\small (a) Evolution of false non-zeros.}
 	\end{tabular}
 	\caption{\scriptsize  Evolution with $n$ of estimations $\widehat \eb_{n}$  (\textit{transaLASSO}, black dashed), $\widehat \eb^{aqLASSO}_{m+n}$ (\textit{aLASSO\_all}, blue line), $\widehat \eb^{aqLASSO}_{n}$ (\textit{aLASSO\_n}, red dotted line),   when  $\beta_1^*=0.1$, $\beta^2_0=0.2$, $\beta^*_3=-0.1$,  $p=10$, $\varepsilon \sim 0.2 {\cal N}(0,1) +{\cal N}^2(0,1)$. 	}
 	\label{fig_betapetit_p10}
 \end{figure*}
 
\begin{figure*}[h!] 
	\begin{tabular}{ccc}
		\includegraphics[width=0.3\linewidth,height=4.5cm]{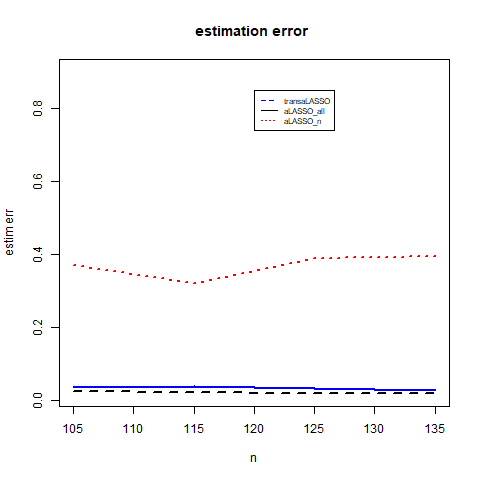} &
		\includegraphics[width=0.3\linewidth,height=4.5cm]{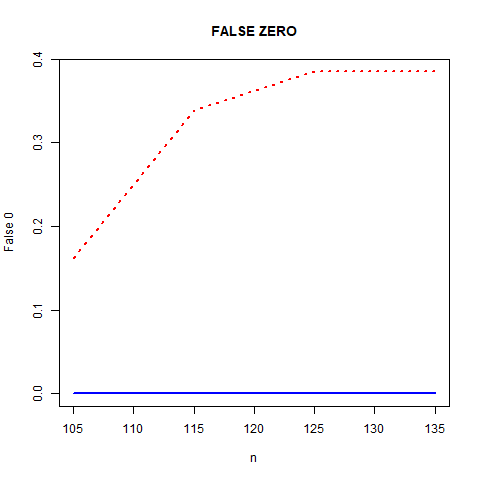}&
		\includegraphics[width=0.3\linewidth,height=4.5cm]{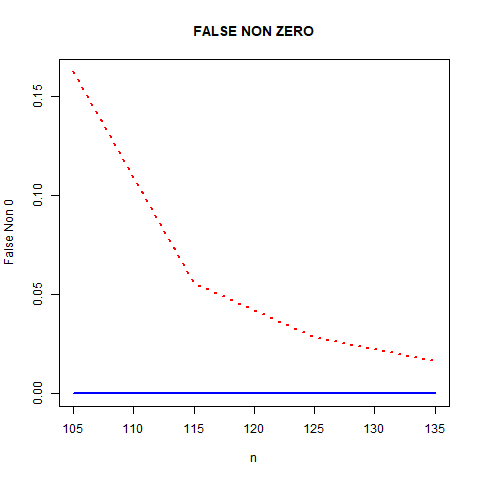} \\
		{\small (a) Evolution of estimation error.} &
		{\small (b)  Evolution of false  zeros.} &
		{\small (a) Evolution of false non-zeros.}
	\end{tabular}
	\caption{\scriptsize   Evolution   with $n$ of estimations $\widehat \eb_{n}$  (\textit{transaLASSO}, black dashed), $\widehat \eb^{aqLASSO}_{m+n}$ (\textit{aLASSO\_all}, blue line), $\widehat \eb^{aqLASSO}_{n}$ (\textit{aLASSO\_n}, red dotted line), when  $\beta_1^*=0.1$, $\beta^*_2=0.2$, $\beta^*_3=-0.1$,  $p=100$, $\varepsilon \sim 0.2 {\cal N}(0,1) +{\cal N}^2(0,1)$. 	}
	\label{fig_betapetit_p100}
\end{figure*}
\begin{figure*}[h!] 
	\begin{tabular}{ccc}
		\includegraphics[width=0.3\linewidth,height=4.5cm]{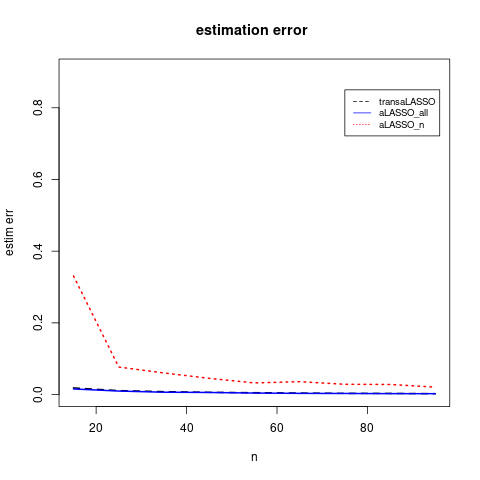} &
		\includegraphics[width=0.3\linewidth,height=4.5cm]{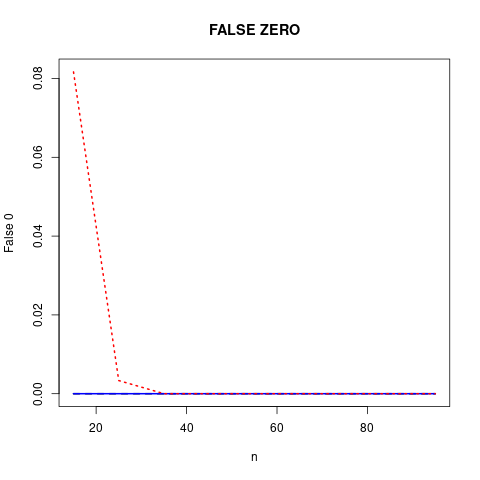}&
		\includegraphics[width=0.3\linewidth,height=4.5cm]{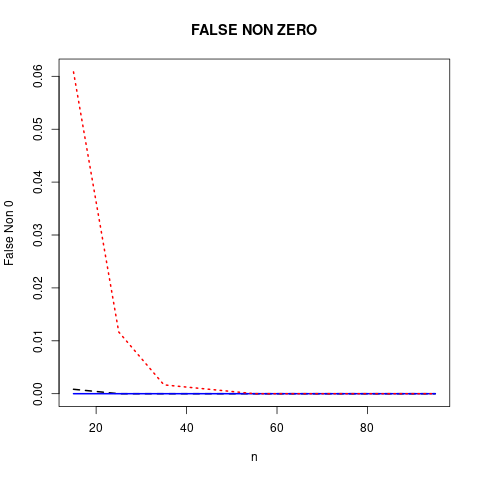} \\
		{\small (a) Evolution of estimation error.} &
		{\small (b)  Evolution of false  zeros.} &
		{\small (a) Evolution of false non-zeros.}
	\end{tabular}
	\caption{\scriptsize   Evolution   with $n$  of estimations $\widehat \eb_{n}$  (\textit{transaLASSO}, black dashed), $\widehat \eb^{aqLASSO}_{m+n}$ (\textit{aLASSO\_all}, blue line), $\widehat \eb^{aqLASSO}_{n}$ (\textit{aLASSO\_n}, red dotted line), when $\beta_1^*=1$, $\beta^*_2=2$, $\beta^*_3=1$,  $p=10$, $\varepsilon \sim 0.2 {\cal N}(0,1) +{\cal N}^2(0,1)$. 	}
	\label{fig_betagrand_p10}
\end{figure*}
Taking into account the results in Table \ref{Tabl2}, we examined in greater detail the comparison between $\widehat\eb_{n}$ and $\widehat \eb^{aqLASSO}_{m+n}$ when the values of $\beta^*_j$, $j \in {\cal S}^*$, are sufficiently far from 0 and $\varepsilon$ follow a ${\cal N}(0,1)$ distribution.  We either vary $m$ relative to $n$ or choose a different distribution for $\varepsilon$. In Table \ref{Tabl5}, for a fixed $n$, we take $m \in \{n^{5/2}, n^3, n^{7/2}\}$. The results will also be compared with two entries from Table \ref{Tabl2}. We conclude that if $p$ and $n$ are close and $\varepsilon \sim {\cal N}(0,1)$, then starting from $m \geq n^{5/2}$ the estimators $\widehat \eb_n$ and $\widehat \eb^{aqLASSO}_{m+n}$ yield the same results; that is, the source dataset must contain more than $m^2$ observations for the estimates obtained to be the same. Note that this holds only for the normal distribution; for the skewed distribution $ \sim 0.2 {\cal N}(0,1) +{\cal N}^2(0,1)$, both estimation methods yield the same results (see Table \ref{Tabl2}). The limited memory of the computer used, as well as the long computation time required to calculate $\widehat \eb^{aqLASSO}_{m+n}$, forced us to set the number of Monte Carlo replicates used for the results presented in Table \ref{Tabl5} to 50 when $m \in \{n^3, n^{7/2}\}$ and $p \in \{25, 100\}$. 
In Figure \ref{fig_Ubetapetit_p10}, for $m=n^2$, the error distribution is a uniform distribution ${\cal U}[-0.5,0.75]$ for which $\widehat \tau_n=0.40$. It follows that these two estimators behave similarly when the error distribution is uniform.\\
\begin{table*}[h] 
\begin{center}
	{\scriptsize
			\begin{tabular}{BBBBBBB}\hline  \toprule
					$p$& $n$ & $m$  & estimator &  $\frac{\| (\widehat \eb -\eb^*)_{{\cal S}^*} \|_2}{ \|   \eb^* \|_2}$  &    $\frac{| \{ j \in {\cal S}^*; \widehat \beta_j = 0\}|}{ |{\cal S}^*|}$  &  $\frac{| \{ j \in {{\cal S}^*}^c; \widehat \beta_j \neq 0\}|}{ |{{\cal S}^*}^c|}$\\   \hline  \toprule
				3	& 10& $n^{2}$ & $\widehat \eb_{n}$  &  0.30 & 0.20 &0.47 \\ 
				~	& & & $\widehat \eb^{aqLASSO}_{m+n}$& 0.10&0  & 0.01 \\  \cmidrule{3-7}
						& & $n^{5/2}$ & $\widehat \eb_{n}$  &  0.08 & 0.03 &0.05 \\ 
					& & & $\widehat \eb^{aqLASSO}_{m+n}$& 0.06&0  & 0.005 \\  \cmidrule{3-7}
					& & $n^{3}$ & $\widehat \eb_{n}$  &  0.03 & 0 &0 \\ 
					& & & $\widehat \eb^{aqLASSO}_{m+n}$& 0.03&0  & 0 \\  \cmidrule{3-7}
					& & $n^{7/2}$ & $\widehat \eb_{n}$  &  0.01 & 0 &0 \\ 
					& & & $\widehat \eb^{aqLASSO}_{m+n}$& 0.01&0  & 0 \\  \midrule 
				5	& 20& $n^{5/2}$ & $\widehat \eb_{n}$  &  0.03 & 0.01 &0.02 \\ 
					& & & $\widehat \eb^{aqLASSO}_{m+n}$& 0.02&0  & 0 \\  \cmidrule{3-7}
						& & $n^{3}$ & $\widehat \eb_{n}$  &  0.007 & 0 &0 \\ 
					& & & $\widehat \eb^{aqLASSO}_{m+n}$& 0.007&0  & 0 \\  \cmidrule{3-7}
						& & $n^{7/2}$ & $\widehat \eb_{n}$  &  0.004 & 0 &0 \\ 
					& & & $\widehat \eb^{aqLASSO}_{m+n}$& 0.004&0  & 0 \\  \cmidrule{1-7}
					25	& 50& $n^{5/2}$ & $\widehat \eb_{n}$  &  0.005 & 0 &0 \\ 
					& & & $\widehat \eb^{aqLASSO}_{m+n}$& 0.005&0  & 0 \\  \cmidrule{3-7}
					& & $n^{3}$ & $\widehat \eb_{n}$  &  0.002 & 0 &0 \\ 
					& & & $\widehat \eb^{aqLASSO}_{m+n}$& 0.002&0  & 0 \\  \cmidrule{3-7}
					& & $n^{7/2}$ & $\widehat \eb_{n}$  &  0.0005 & 0 &0 \\ 
					& & & $\widehat \eb^{aqLASSO}_{m+n}$& 0.0005&0  & 0 \\  \midrule 
						100	& 115& $n^{5/2}$ & $\widehat \eb_{n}$  &  0.001 & 0 &0 \\ 
					& & & $\widehat \eb^{aqLASSO}_{m+n}$& 0.001&0  & 0 \\  \cmidrule{3-7}
					& & $n^{3}$ & $\widehat \eb_{n}$  &  0.0002 & 0 &0 \\ 
					& & & $\widehat \eb^{aqLASSO}_{m+n}$& 0.0001&0  & 0\\ [4pt] \hline 
					\end{tabular}
}
\end{center}
	\caption{Comparison study of $\widehat \eb_{n}$ and  $\widehat \eb^{aqLASSO}_{m+n}$,  when  $\varepsilon \sim {\cal N}(0,1)$. If $p \geq 5$ we take ${\cal S}^*=\{1, 2, 3\}$ with   $\beta_1^*=1$, $\beta^*_2=2$, $\beta^*_3=-1$. If $p = 3$ we take ${\cal S}^*=\{1\}$ with   $\beta_1^*=1$.  } 
		\label{Tabl5} 
\end{table*}

\begin{figure*}[h!] 
	\begin{tabular}{ccc}
		\includegraphics[width=0.3\linewidth,height=4.5cm]{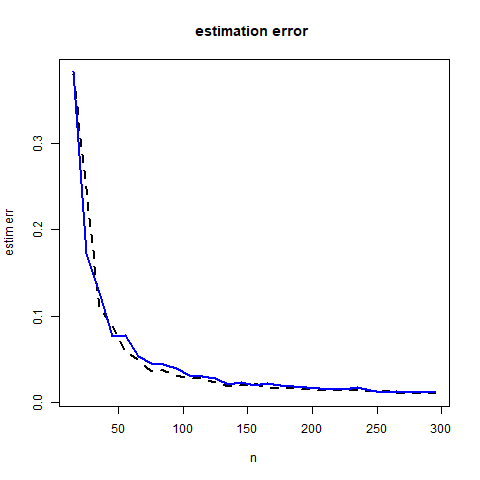} &
		\includegraphics[width=0.3\linewidth,height=4.5cm]{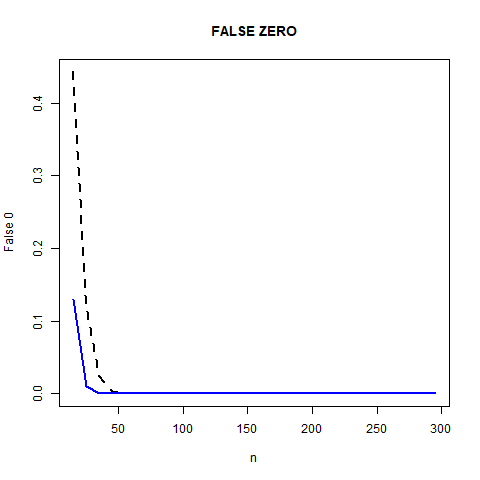}&
		\includegraphics[width=0.3\linewidth,height=4.5cm]{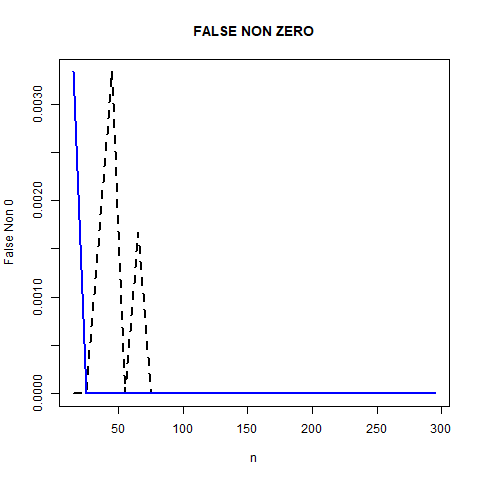} \\
		{\small (a) Evolution of estimation error.} &
		{\small (b)  Evolution of false  zeros.} &
		{\small (a) Evolution of false non-zeros.}
	\end{tabular}
	\caption{\scriptsize  Evolution with $n$  of estimations $\widehat \eb_{n}$  (\textit{transaLASSO}, black dashed), $\widehat \eb^{aqLASSO}_{m+n}$ (\textit{aLASSO\_all},  blue line), when  $\beta_1^*=0.1$, $\beta^2_0=0.2$, $\beta^*_3=-0.1$,  $p=10$, $\varepsilon \sim {\cal U}(-0.5,0.75)$. 	}
	\label{fig_Ubetapetit_p10}
\end{figure*}

\begin{figure*}[h!] 
	\begin{tabular}{ccc}
		\includegraphics[width=0.3\linewidth,height=4.5cm]{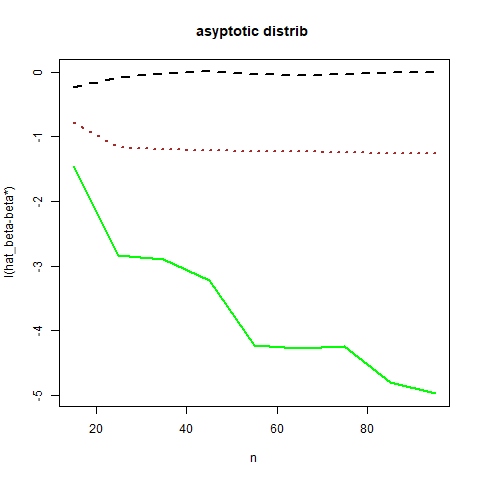} &
		\includegraphics[width=0.3\linewidth,height=4.5cm]{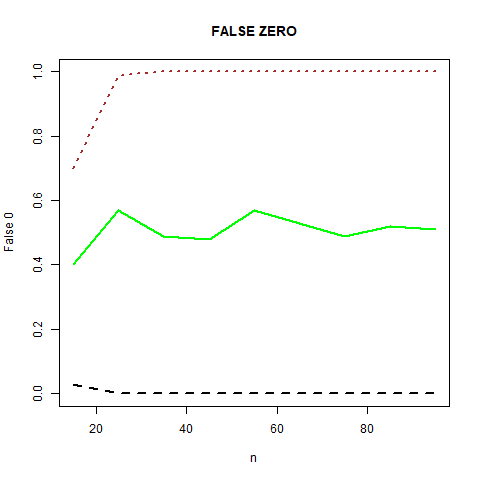}&
		\includegraphics[width=0.3\linewidth,height=4.5cm]{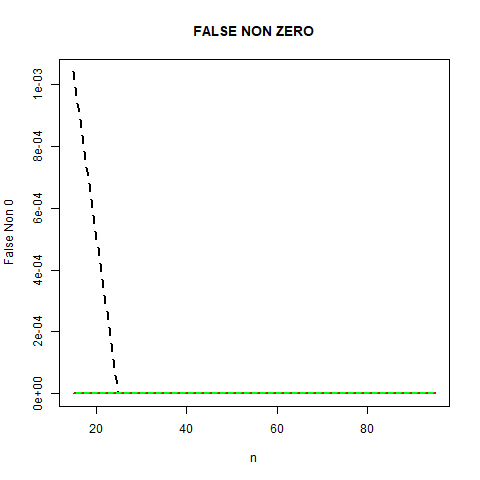} \\
		{\small (a) Evolution of $l(\widehat \eb_{n} -\eb^*)_{{\cal S}^*}$.} &
		{\small (b)  Evolution of false  zeros.} &
		{\small (a) Evolution of false non-zeros.}
	\end{tabular}
	\caption{\scriptsize  Evolution with $n$ of estimations $\widehat \eb_{n}$ for   three cases {\it (i)} (black dashed),{\it (ii)} (green line), {\it (iii)} (brown dotted line), when  ${\cal S}^*=\{1\}$, $\beta_1^*=1$,  $p=10$, $\varepsilon \sim 0.2 {\cal N}(0,1) +{\cal N}^2(0,1)$. 	}
	\label{fig_comptr1_p10}
\end{figure*}
\begin{figure*}[h!] 
\begin{tabular}{ccc}
	\includegraphics[width=0.3\linewidth,height=4.5cm]{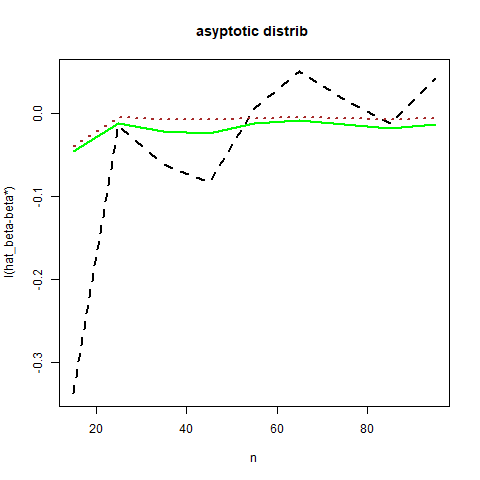} &
	\includegraphics[width=0.3\linewidth,height=4.5cm]{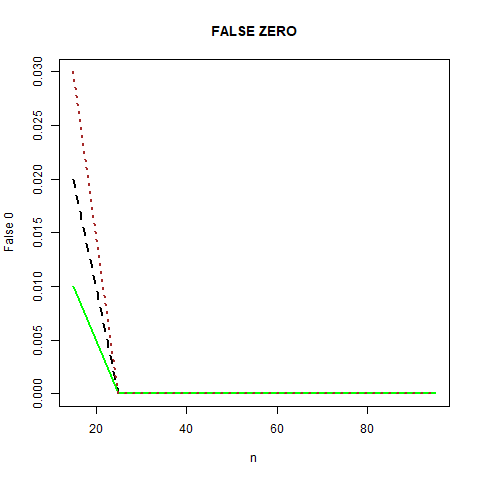}&
	\includegraphics[width=0.3\linewidth,height=4.5cm]{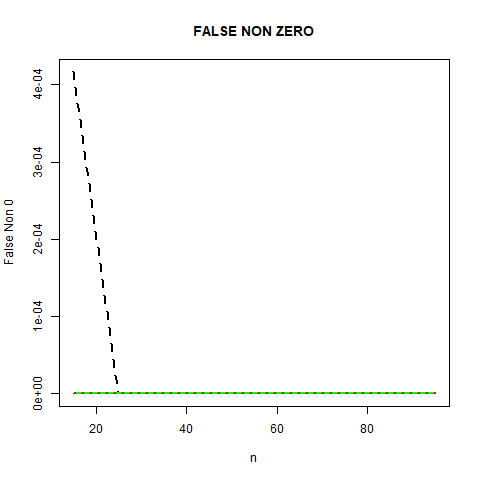} \\
	{\small (a) Evolution of $l(\widehat \eb_{n} -\eb^*)_{{\cal S}^*}$.} &
	{\small (b)  Evolution of false  zeros.} &
	{\small (a) Evolution of false non-zeros.}
\end{tabular}
\caption{\scriptsize  Evolution with $n$ of estimations $\widehat \eb_{n}$ for   three cases {\it (i)} (black dashed),{\it (ii)} (green line), {\it (iii)} (brown dotted line), when  ${\cal S}^*=\{1\}$, $\beta_1^*=1.125$,  $p=10$, $\varepsilon \sim 0.2 {\cal N}(0,1) +{\cal N}^2(0,1)$. 	}
\label{fig_comptr1_25_p10}
\end{figure*}
\begin{table*} 
		\begin{center}
						{\scriptsize
			\begin{tabular}{BBBBB}\hline  
		\multicolumn{1}{c}{	$(\beta^*_1,\beta^*_2)$}&	 \multicolumn{1}{c}{case}&  $  l\|(\widehat \eb_n -\eb^*)_{{\cal S}^*} \|_2 $  &    $\frac{| \{ j \in {\cal S}^*; \widehat \beta_{n,j} = 0\}|}{ |{\cal S}^*|}$  &   $\frac{| \{ j \in {{\cal S}^*}^c; \widehat \beta_{n,j} \neq 0\}|}{ |{{\cal S}^*}^c|}$\\ [2pt] \toprule
		 	(5,2)	  & (i)& 0.51 & 0& 0 \\[2pt]
			&  (ii) & 0.05 & 0& 0 \\[2pt]
			&  (iii) & 0.006 & 0& 0 \\[2pt] \cmidrule{1-5}
				(0.5,0.2)	  & (i)& 0.48 & 0& 0 \\[2pt]
			&  (ii) & 7 & 1& 0 \\[2pt]
			&  (iii) & 0.88 & 1& 0 \\ [2pt]\cmidrule{1-5}
				(5,0.2)	 & (i)& 0.52 & 0& 0 \\[2pt]
			&  (ii) & 2 & 0.5& 0 \\[2pt]
			&  (iii) & 0.25 & 0.5& 0 \\[2pt] \hline
			\end{tabular}
		}
	\end{center}
	\caption{ Comparison of  cases (i), (ii), (iii) on $\widehat \eb_n$,  when $p=10$, $n=100$,  ${\cal S}^*=\{1,2\}$, $\varepsilon \sim 0.2 {\cal N}(0,1) +{\cal N}^2(0,1)$. } 
	\label{Tabl4} 
\end{table*}
\textbf{Conclusion of this subsection}\\
The estimator $\widehat \eb^{aqLASSO}_{n}$ performs less well in terms of bias and variable selection when the number of observations in the target data is small. For coefficients close to 0 and asymmetric distribution errors, the estimator $\widehat \eb_n$ is less biased than $\widehat \eb^{aqLASSO}_{m+n}$.\\
The key advantage of the $\widehat \eb_n$ estimator is its computation time, which is significantly shorter than that of the other two adaptive LASSO quantile estimators, especially for high-dimensional models.
\subsection{Study of $\widehat \eb_{n}$}
\label{subsec_etudebn}
In this subsection,  our focus is solely on studying the estimator $\widehat \eb_{n}$ examining the three possible cases stated in Theorem \ref{TheoremTF 4.1}.  More specifically, in Table \ref{Tabl4} and Figures \ref{fig_comptr1_p10}, \ref{fig_comptr1_25_p10}, we study $\widehat \eb_{n}$ in  three cases (i), (ii), and (iii) of Theorem \ref{TheoremTF 4.1}.  The tuning parameters $\lambda_n$ and $\eta_n$ will differ for  three cases: for case (i), we take $\lambda_n=n^{1/2}$ and $\eta_n=n^{1/2+max(\gamma_1+0.001,(\gamma_1+\gamma_2)/2)}$; for case (ii)  $\lambda_n=\eta_n=1/2$ and for case (iii) $\lambda_n=n^{0.95}$, $\eta_n=n^{0.90}$.  The values of $l$ are $m^{1/2}$, $n^{1/2}$, and $n/\lambda_n$, respectively.  We refer to Table \ref{Tabl4}. For a model with  $p=10$ and only two non-zero coefficients, cases \textit{(ii)} and \textit{(iii)} yield more biased estimates than case  \textit{(i)}.  More importantly, they fail to detect coefficients close to zero. Moreover, the estimates of coefficients far from zero are slightly more biased in configuration \textit{(i)}. 
 The results for $p=100$ are not shown because they are identical to those obtained for $p=10$ and a fixed $\eb^*_{{\cal S}^*}$ and  the same distribution of $\varepsilon$. \\
 To better study the bias and the rate of convergence, for Figures \ref{fig_comptr1_p10} and \ref{fig_comptr1_25_p10} we consider only $\beta^*_1$ different from zero, with two possible values: 1 and 1.125. We set $p=10$, $\varepsilon \sim 0.2 {\cal N}(0,1) +{\cal N}^2(0,1)$, and varied $n$. 
 By comparing Figures \ref{fig_comptr1_p10} and \ref{fig_comptr1_25_p10}, we see that for $\beta^*_1 =1$, the estimator obtained in case (i) is unbiased.  This is unlike the estimators obtained in cases (ii) and (iii). It also produces very few false zeros in practice. When $|\beta^*_1|$ is large, then the estimators obtained in the three cases are quite similar (see Figure \ref{fig_comptr1_25_p10}).
 For the case shown in Figure \ref{fig_comptr1_25_p10}, with  $\beta_1^*=1.125$,   we have plotted the histograms of the estimates $\widehat \beta_{n,1}$. The Shapiro test confirms the normality of the estimations, yielding p-values of 0.22, 0.23, and 0.15 for (i), (ii), and (iii), respectively.\\  
 
\begin{figure*}[h!] 
	\begin{tabular}{ccc}
		\includegraphics[width=0.3\linewidth,height=4.5cm]{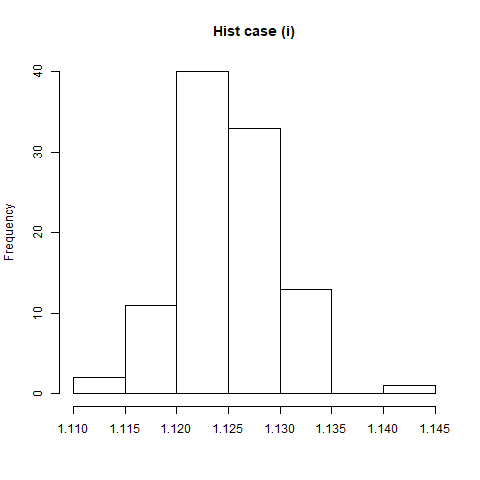} &
		\includegraphics[width=0.3\linewidth,height=4.5cm]{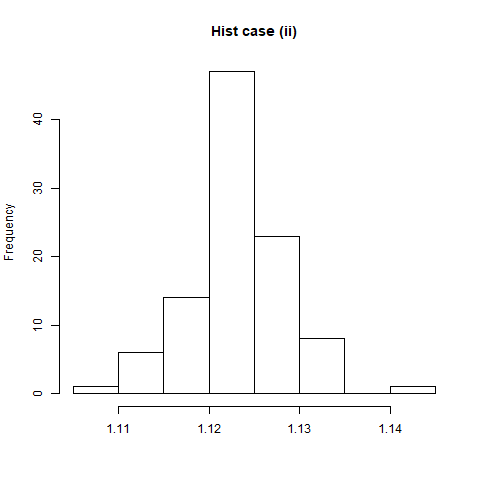}&
		\includegraphics[width=0.3\linewidth,height=4.5cm]{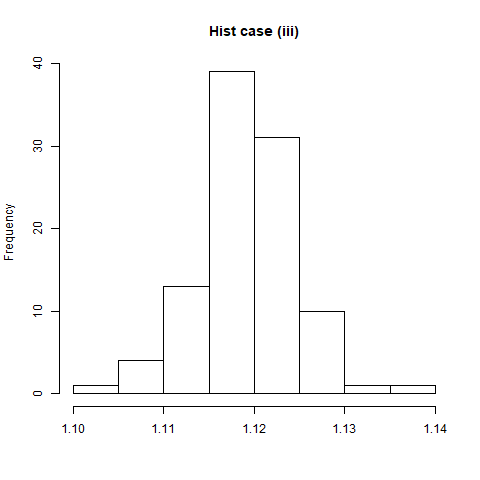} \\
		{\small (a) Case (i) }&
		{\small (b) Case (ii) }&
		{\small (a) Case (iii)}
	\end{tabular}
	\caption{\scriptsize  Histograms of  $\widehat \beta_{n,1}$ for   three cases {\it (i)},{\it (ii)}, {\it (iii)},    when $n=95$,  ${\cal S}^*=\{1\}$, $\beta_1^*=1.125$,  $p=10$, $\varepsilon \sim 0.2 {\cal N}(0,1) +{\cal N}^2(0,1)$. 	}
	\label{hist}
\end{figure*}
\textbf{Conclusion of this subsection}\\ 
For  three cases \textit{(i)}, \textit{(ii)}, and \textit{(iii)}, the simulations confirm the theoretical results stated in Section \ref{section_resultats}, namely that, for  $m \gg n$, it is preferable to choose tuning parameters  $\lambda_n$ and $\eta_n$ that satisfy the assumptions  of \textit{(ii)}.\\

\textbf{Overall conclusion based on the simulations}\\
The adaptive transfer LASSO quantile  method has an advantage over traditional adaptive quantile LASSO methods: it provides more accurate estimators, which are potentially of similar quality, in less computational time. This is especially true when the model has a large number of explanatory variables.  The proposed algorithm allows us to find an adaptive transfer LASSO estimator, $\widehat \eb_n$ that satisfies the sparsity property. This is achieved by using a consistent yet non-sparse estimator derived from the source training data, resulting in a shorter runtime. The estimator $\widehat \eb_n$ computed on the target data—which may be much smaller in size—utilizes the transfer of consistency from this initial estimator and achieves sparsity via an $L_1$-type penalty. Its computation time is significantly faster than that of the adaptive LASSO quantile  estimator computed on both the target and source data simultaneously. Furthermore, the result of the proposed algorithm does not depend on the initial value  $\eb^{(0)}$.
\section{Application on real data}
\label{section_application}
We illustrate our transfer learning estimation  method and algorithm on a real data application of physicochemical properties of protein tertiary structure. The data  can be downloaded from the \textit{Machine Learning Repository} site:\\
\href{http://archive.ics.uci.edu/ml/datasets/Physicochemical+Properties+of+Protein+Tertiary+Structure}{  {http://archive.ics.uci.edu/ml/datasets/Physicochemical+\\
		Properties+of+Protein+Tertiary+Structure}}\\
The data set consists of 45730 observations to predict the dependent variable $Y$ which represents the size of the residue. Hence, $m+n=45730$. The nine continuous explanatory variables are: $X_1$ - total surface area,  $X_2$ -  non polar exposed area, $X_3$ - fractional area of exposed non polar residue, $X_4$ - fractional area of exposed non polar part of residue, $X_5$ - molecular mass weighted exposed area, $X_6$ - average deviation from standard exposed area of residue, $X_7$ - euclidian distance, \textit{$X_8$} - secondary structure penalty, $X_9$ - spacial distribution constraints.  In order to remove the effect of the measurement unit, the nine explanatory variables  are standardized. Moreover, we consider the target data size  equal to $n=\lfloor\sqrt{45730}\rfloor=213$, with $\lfloor . \rfloor$ the integer part. Then the source observations are $1, \cdots, 45517$. The  quantile index $\tau$ will be estimated on the source data using  the   standardized  explained variable $(y_i-\bar y_m)/\widehat{\sigma}_y$, with $\widehat{\sigma}_Y$ the empirical standard deviation of $(y_i)_{1 \leqslant i \leqslant m}$.  Based on assumption (A1),  we consider an  empirical estimation of $\tau$ as:
	\[
	\widehat{\tau}=m^{-1}\sum^{m}_{i=1} \frac{y_i-\bar y_m}{\widehat{\sigma}_y} \e1_{y_i-\bar y_m<0}.
	\]
We obtain, $\widehat{\tau}=0.577$. The quantile estimation $\widetilde \eb_m$ of $(\beta_0^*, \beta^*_1, \cdots \beta^*_9)$ for the source model $Y_i=\beta_0+\beta_1 X_{1i}+\cdots +\beta_9 X_{9i} +\varepsilon_i$, $i=1, \cdots , m$, is $(8.24, 10.98, 0.48, 2.41, -7.49, -3.59, -1.82, -0.36,$ $ 1.08, -0.09)$. Then, for the  following studies, we will consider the response variable $\widetilde Y \equiv Y-8.24$ and the model without an intercept:
	\begin{equation}
	\widetilde Y_i=\beta_1 X_{1i}+\cdots +\beta_9 X_{9i} +\varepsilon_i.
	\label{eq_appli}
\end{equation}
	In Table \ref{Tabl_appli} we present the coefficient estimations   by three methods: $\widehat{\eb}_n$ - adaptive transfer LASSO, $\widehat \eb^{aqLASSO}_{m+n}$ - adaptive LASSO on source and target data, and  $\widehat \eb^{aqLASSO}_{n}$ - adaptive LASSO on target data. We present also the quantile estimation $\widetilde \eb_{m+n}$ for model (\ref{eq_appli}) calculated on source and target data. For a significance level of 0.05, the hypothesis tests for these last coefficient estimations indicates that only the coefficient for $X_9$ can be considered null (p-value=0.23). The same variable has the coefficient shrinked to  0  for  $\widehat{\eb}_n$, $\widehat \eb^{aqLASSO}_{m+n}$, $\widehat \eb^{aqLASSO}_{n}$. In return,  $\widehat \eb^{aqLASSO}_{m+n}$, $\widehat \eb^{aqLASSO}_{n}$ still have one, or two, other coefficients set to 0. For each of the four estimation, and then prediction, we also calculate  MAD=$n^{-1} \sum^{m+n}_{i=m+1} |\widetilde Y_i - \widehat{ \widetilde{ {Y}}}_i|$ the mean of  the absolute value of the model residuals  on target observations.\\
	The  predictions of $(\widetilde Y_i )_{m+1 \leqslant i \leqslant m+n}$ for target observations are equally accurate when using $\widehat{\eb}_n$, $\widehat \eb^{aqLASSO}_{m+n}$, $\widetilde \eb_{m+n}$ , while they are slightly more accurate when using $\widehat \eb^{aqLASSO}_{n}$. However, because the target number of observations is relatively small, the coefficient parameter estimations differ for  $\widehat \eb^{aqLASSO}_{n}$. 
	These data were also modeled by \cite{Ciuperca.2022} using   aggregation techniques of estimators. Depending on the number of sharing blocks, the paper of \cite{Ciuperca.2022} found that the aggregated method shrunk zero one through four coefficients, with coefficient of $X_9$ always set to 0. The results obtained in the present paper for $\widehat{\eb}_n$  are therefore consistent with those of \cite{Ciuperca.2022}.
 \begin{table*}[h] 
   	\begin{center}
 		{\scriptsize
 				\begin{tabular}{cccccccccccc}\hline  \toprule
 					estimator   & $\widehat {\cal {S}}^c$  & MAD & $\widehat{\beta}_1$ & $\widehat{\beta}_2$ & $\widehat{\beta}_3$ & $\widehat{\beta}_4$ & $\widehat{\beta}_5$ & $\widehat{\beta}_6$ & $\widehat{\beta}_7$ & $\widehat{\beta}_8$ & $\widehat{\beta}_9$ \\   \hline  \toprule
 					$\widehat{\eb}_n$ &   $\{9\}$ &  \textit{4.36}  & \textit{10.98} & \textit{0.48}& \textit{2.41} & -\textit{7.49} & \textit{-3.59} & \textit{-1.82}& \textit{-0.37} & \textit{1.08} & \textit{0 } \\ 
 					$\widehat \eb^{aqLASSO}_{m+n}$ & $\{2,9\}$ &4.36 &  10.08 & 0& 2.55 &-7.59 &   -2.57 &  -1.43 &  -0.20 &  1.08 &  0 \\ 
 					$\widehat \eb^{aqLASSO}_{n}$ & $\{3,5,9\}$ &4.14  & 6.07 & 5 .92 &  0 &  -6.36 &   0 &  -5.19 &  -2.66 & 2.39 & 0 \\ 
 					$\widetilde \eb_{m+n}$ & $\emptyset $ &4.36  & 10.92 &  0.49 &  2.41 &  -7.48 &   -3.54 &  -1.83 &  -0.36& 1.1 & -0.08 \\[4pt] \hline 
 				\end{tabular}
 		}
 	\end{center}
 	\caption{Results for application to the physicochemical properties of the tertiary structure of proteins: $n=213$, $m=45517$.  } 
 		\label{Tabl_appli} 
 \end{table*}
\section{Proofs of Theorems}
\label{section_proofs}
In this section we present the proofs of the two theorems stated in Section  \ref{section_resultats}.\\

	\noindent  {\bf Proof of Theorem \ref{TheoremTF 4.1}}.  \\
\textit{(i)} In this case, we consider  $l=m^{1/2}$.\\
We are studying the  difference:  $Z_{m,n}(\eb) -Z_{m,n}(\eb^*)=$
\begin{multline*}
{\cal P}_n+\frac{m^{(\gamma_1 - 1)/2} \lambda_n}{n} \sum_{j=1}^p \frac{|u_j+m^{1/2} \beta^*_j|-m^{1/2} |\beta^*_j|}{|z_{m,j}
	+m^{1/2} \beta^*_j|}\\+\frac{\eta_n}{n m^{(\gamma_2+1)/2}}\sum_{j=1}^p |z_{m,j}+m^{1/2} \beta^*_j|^{\gamma_2} \big(|u_j-z_{m,j}| - |z_{m,j}|\big),
\end{multline*}
with $Z_{m,n}(\eb)$ defined by (\ref{relZ}) and ${\cal P}_n$ by (\ref{eq3}).\\
Taking into account  relation (\ref{eq6}) obtained under assumptions (A1), (A2), (A3), we have:
\begin{equation}
	\label{ePn}
	\begin{split}
	\frac{n}{{m^{(\gamma_1-1)/2}} \lambda_n} {\cal P}_n  &=O_\PP \bigg(\frac{n m^{(1-\gamma_1)/2}}{\lambda_n}  l^{-2}\bigg)\\
	&=O_\PP \bigg(\frac{n m^{(1+\gamma_1)/2}}{\lambda_n}  \bigg).
	\end{split}
\end{equation}
We will now consider three subcases.\\
{\it (i)-(a)} If $m^{{\gamma_1}/2} \lambda_n \eta_n^{-1} \rightarrow 0$ as ${(m,n) \rightarrow \infty}$, let us consider:
\[
V_{m,n}(\eu) \equiv 	\frac{n}{{m^{(\gamma_1-1)/2}} \lambda_n}  \big(	Z_{m,n}(\eb) -	Z_{m,n}(\eb^*)\big).
\]
As in \cite{Takada.Fujisawa.24}, relations (184) and (185), taking into account relation (\ref{ePn}), we have:
\begin{equation*}
	\begin{split}
&	V_{m,n}(\eu)    \overset{\PP} {\underset{(m,n) \rightarrow \infty}{\longrightarrow}} \\
&	V(\eu) \equiv 
	\left\{
	\begin{array}{ll}
		\displaystyle{	\sum^p_{j=1}\frac{|u_j|}{|z_j|^{\gamma_1}} \e1_{\beta^*_j =0},} &\  \text{if} \quad \eu_{{\cal S}^*} =\ez_{{\cal S}^*},\\
		\infty,&   \text{otherwise} .
	\end{array}
	\right.
	\end{split}
\end{equation*} 
Then:
\begin{equation*}
	\begin{split}
	m^{1/2} (\widehat\eb_n-\eb^*)   =\argmin_{\eu \in \R^p} V_{m,n}(\eu)      \overset{\cal L} {\underset{(m,n) \rightarrow \infty}{\longrightarrow}}\\
	 \argmin_{\eu \in \R^p} 	V(\eu)=
	\left\{
	\begin{array}{ll}
		0, &\quad \text{if} \quad j \in {{\cal S}^*}^c\\
		z_j,& \quad \text{if} \quad j \in {{\cal S}^*} .
	\end{array}
	\right.
	\end{split}
\end{equation*} 
{\it (i)-(b)} If $m^{{\gamma_1}/2} \lambda_n \eta_n^{-1} \rightarrow \infty$ as ${(m,n) \rightarrow \infty}$, then,  taking into account relation (\ref{ePn}), we have:
\[
\begin{split}
V_{m,n}(\eu) \equiv 	\frac{n{m^{1/2}}}{ \eta_n}  \big(	Z_{m,n}(\eb) -	Z_{m,n}(\eb^*)\big)  \overset{\PP} {\underset{(m,n) \rightarrow \infty}{\longrightarrow}} \\
V(\eu) \equiv 
\left\{
\begin{array}{ll}
	\displaystyle{	\sum^p_{j=1}|\beta^*_j|^{\gamma_2}|u_j-z_{j}|,} &  \text{ if }   \eu_{{{\cal S}^*}^c} =\oo,\\
	\infty,&   \text{otherwise} .
\end{array}
\right.
\end{split}
\]
From where:
\begin{equation*}
	\begin{split}
	m^{1/2} (\widehat\eb_n-\eb^*)   =\argmin_{\eu \in \R^p} V_{m,n}(\eu)      \overset{\cal L} {\underset{(m,n) \rightarrow \infty}{\longrightarrow}} \\
	\argmin_{\eu \in \R^p} V(\eu)=
	\left\{
	\begin{array}{ll}
		0, &\quad \text{if} \quad j \in {{\cal S}^*}^c,\\
		z_j,& \quad \text{if} \quad j \in {{\cal S}^*} .
	\end{array}
	\right.
	\end{split}
\end{equation*}
For the last relation, we used the fact that the minimum value of $V_{m,n}(\eu)$ is 0, so we need to find the $\eu$ for which $V_{m,n}(\eu) = 0$. \\
{\it (i)-(c)} If $m^{{\gamma_1}/2} \lambda_n \eta_n^{-1} \rightarrow \rho_0 >0$ as ${(m,n) \rightarrow \infty}$, then, taking into account relation (\ref{ePn}), we have:
\[
\begin{split}
V_{m,n}(\eu) \equiv 	\frac{n{m^{1/2}}}{ \eta_n}  \big(	Z_{m,n}(\eb) -	Z_{m,n}(\eb^*)\big)  \overset{\PP} {\underset{(m,n) \rightarrow \infty}{\longrightarrow}}\\
 V(\eu) \equiv 
\sum^p_{j=1}\big(|\beta^*_j|^{\gamma_2}|u_j-z_j| \e1_{\beta^*_j \neq 0}+r_0 \frac{|u_j|}{|z_j|^{\gamma_1}} \e1_{\beta^*_j =0}\big).
\end{split}
\]
From where we have:
\begin{equation*}
	\begin{split}
m^{1/2} (\widehat\eb_n-\eb^*)   =\argmin_{\eu \in \R^p} V_{m,n}(\eu)      \overset{\cal L} {\underset{(m,n) \rightarrow \infty}{\longrightarrow}} \\
	\argmin_{\eu \in \R^p} V(\eu)=
	\left\{
	\begin{array}{ll}
		0, &\quad \text{if} \quad j \in {{\cal S}^*}^c,\\
		z_j,& \quad \text{if} \quad j \in {{\cal S}^*} .
	\end{array}
	\right.
	\end{split}
\end{equation*}
For the last relation, we used the fact that the minimum of $V_{m,n}(\eu)$ is 0; therefore, we must find the $\eu$ for which $V_{m,n}(\eu) = 0$. \\
Then, the convergence of relation  (\ref{eqTF30}) is proven for all three subcases.\\
\textit{(ii)}  In this case, we consider $l = n^{1/2}$.\\
Taking into account the results proven in case \textit{(i)} and the proof of Theorem 4.1 of \cite{Takada.Fujisawa.24}, we obtain   $V_{m,n}(\eu) \equiv     n  \big(    Z_{m,n}(\eb) -	Z_{m,n}(\eb^*)\big)  \overset{\PP} {\underset{(m,n) \rightarrow \infty}{\longrightarrow}} V(\eu)$,  with $V(\eu)  $ defined 
\begin{itemize}
	\item when 	 $\eu_{{{\cal S}^*}^c}=\oo$ by 
	\begin{align*}
		&	\frac{f(0)}{2}\eu^\top \eU \eu + \eu^\top \bfW \\
			&+	\sum_{j \in {{\cal S}^*}}\bigg(\lambda_0 \frac{\sgn(\beta^*_j)}{|\beta^*_j|^{\gamma_1}}u_j+\eta_0|\beta^*_j|^{\gamma_2}|u_j-r_0^{1/2}z_j|\bigg),
	\end{align*}
\item  by $\infty$ in other cases.
\end{itemize}
\textit{(iii)} In this case, we consider $l = n \lambda_n^{-1}$.\\
Let $V_{m,n}(\eu) \equiv n^2 \lambda^{-2}_n \left( Z_{m,n}(\eb) - Z_{m,n}(\eb^*) \right)$. We can write:
\begin{align*}
	V_{m,n}(\eu) & =\frac{n^2}{\lambda_n^2} {\cal P}_n +\sum^p_{j=1} \bigg(\frac{\big|u_j+\frac{n}{\lambda_n} \beta^*_j\big|- \big|\frac{n}{\lambda_n} \beta^*_j \big|}{\big|\beta^*_j+m^{-1/2} z_j\big|^{\gamma_1}} \e1_{\beta^*_j \neq 0}\\ &\qquad+\frac{m^{\gamma_1/2}|u_j|}{|z_j|^{\gamma_1}} \e1_{\beta^*_j =0}\bigg) \\
	&   + \frac{\eta_n}{\lambda_n}\sum^p_{j=1}  \bigg(\bigg|\beta^*_j+\frac{z_j}{m^{1/2}}\bigg| \e1_{\beta^*_j \neq 0} +\frac{|z_j|^{\gamma_2}}{m^{{\gamma_2}/2}} \e1_{\beta^*_j=0}\bigg)\\ &\bigg(\bigg|u_j-\frac{n}{m^{1/2} \lambda_n}z_j\bigg| - \bigg| \frac{n}{m^{1/2} \lambda_n}z_j\bigg| 	 \bigg)\\
	& = \frac{f(0)}{2} \eu^\top \bigg( \frac{1}{n}\sum^{m+n}_{i=m+1} \eX_i \eX_i^\top\bigg) \eu +l^2 \frac{1}{n} \bfW_n \eu \\
	&+\sum^p_{j=1} \bigg(\frac{\big|u_j+\frac{n}{\lambda_n} \beta^*_j\big|- \big|\frac{n}{\lambda_n} \beta^*_j \big|}{\big|\beta^*_j+m^{-1/2} z_j\big|^{\gamma_1}} \e1_{\beta^*_j \neq 0}\\
	&+ \frac{m^{\gamma_1/2}|u_j|}{|z_j|^{\gamma_1}} \e1_{\beta^*_j =0}\bigg) \\
	&  + \frac{\eta_n}{\lambda_n}\sum^p_{j=1}  \bigg(\bigg|\beta^*_j+\frac{z_j}{m^{1/2}}\bigg| \e1_{\beta^*_j \neq 0} +\frac{|z_j|^{\gamma_2}}{m^{{\gamma_2}/2}} \e1_{\beta^*_j=0}\bigg) \\
	&\bigg(\bigg|u_j-\frac{n}{m^{1/2} \lambda_n}z_j\bigg| - \bigg| \frac{n}{m^{1/2} \lambda_n}z_j\bigg| 	 \bigg),
\end{align*}
which converges in probability as $(m,n) \rightarrow \infty$ to $V(\eu)  $, with $V(\eu)  $ defined 
\begin{itemize}
	\item when 	 $\eu_{{{\cal S}^*}^c}=\oo$ by 
	\begin{align*}
	\frac{f(0)}{2}\eu^\top \eU \eu +\sum_{j \in {{\cal S}^*}} \frac{\sgn(\beta^*_j)}{|\beta^*_j|^{\gamma_1}}u_j,
	\end{align*}
	\item  by $\infty$ in other cases.
\end{itemize}
Then, we deduce that: 
\begin{equation*}
	\begin{split}
&	\frac{	n }{\lambda_n} (\widehat\eb_n-\eb^*)        \overset{\cal L} {\underset{(m,n) \rightarrow \infty}{\longrightarrow}} \argmin_{\eu \in \R^p} V(\eu)\\
&=  \argmin_{\eu \in  {\cal U}}
	\bigg(	\frac{f(0)}{2}\eu^\top \eU \eu +\sum_{j \in {{\cal S}^*}} \frac{\sgn(\beta^*_j)}{|\beta^*_j|^{\gamma_1}}u_j \bigg),
\end{split}
\end{equation*}
with $ {\cal U} \equiv \{ \eu ;  \eu_{{{\cal S}^*}^c}=\oo\}$.
\hspace*{\fill}$\blacksquare$ \\

	\noindent  {\bf Proof of Theorem \ref{TheoremTF_4.3}}.  \\
Let’s first state the KKT optimality conditions:\\
(a) $\forall j \in \widehat{\cal S}_{m,n}$  and  $\widehat\beta_{n,j}  \neq \widetilde\beta_{m,j}$, the following equality holds with probability one: 
\begin{equation}
	\begin{split}
& \tau \sum^{m+n}_{i=m+1} X_{ji} - \sum^{m+n}_{i=m+1} X_{ji} \e1_{Y_i < \eX_i^\top \widehat \eb_{n}}= \\
& \lambda_n v_{m,j} \sgn(\widehat \beta_{n,j}) +\eta_n \omega_{m,j} \sgn(\widehat \beta_{n,j}- \widetilde \beta_{m,j}) .
 \end{split}
 \label{eq7a}
\end{equation}
(b) $ \forall j \in \widehat{\cal S}_{m,n}$  and  $\widehat\beta_{n,j} =\widetilde\beta_{m,j}$, the following equality holds with probability one: 
\begin{equation}
	\begin{split}
&	\tau \sum^{m+n}_{i=m+1} X_{ji} - \sum^{m+n}_{i=m+1} X_{ji} \e1_{Y_i < \eX_i^\top \widehat \eb_{n}} \quad \\
& \qquad =\lambda_n v_{m,j} \sgn(\widehat \beta_{n,j}).
	\label{eq7b}
	\end{split}
\end{equation}
(c) $ \forall j \not \in \widehat{\cal S}_{m,n} $, the following inequality holds with probability one: 
\begin{equation}
	\begin{split}
		& \bigg|\tau \sum^{m+n}_{i=m+1} X_{ji} - \sum^{m+n}_{i=m+1} X_{ji} \e1_{Y_i < \eX_i^\top \widehat \eb_{n}}  \bigg| \\
		& \qquad \leq \lambda_n v_{m,j}   +\eta_n \omega_{m,j}.
		\label{eq7c}
	\end{split}
\end{equation}
Since by Theorem \ref{TheoremTF 4.1} we have $\widehat \eb_{n}  \overset{\PP} {\underset{(m,n) \rightarrow \infty}{\longrightarrow}} \eb^*$,
then 
\begin{equation}
\label{ShSa}
\PP[ {\cal S}^* \subseteq\widehat{\cal S}_{m,n}] {\underset{(m,n) \rightarrow \infty}{\longrightarrow}}  1 .
\end{equation}
Let $j \in \widehat{\cal S}_{m,n} \cap {{\cal S}^*} ^c $. Then $\beta^*_{j}=0$, $\widehat \beta_{n,j} \neq 0$ and  $\widehat \beta_{n,j} \overset{\PP}{\underset{(m,n) \rightarrow \infty}{\longrightarrow}} 0$ with the convergence rate $l^{-1}$.\\
We will show that, 
\begin{equation}
	\label{jhS}
	\PP[j \in \widehat{\cal S}_{m,n} \cap {{\cal S}^*}^c ] {\underset{(m,n) \rightarrow \infty}{\longrightarrow}}  0
\end{equation}
and therefore 
\begin{equation}
	\label{ShSb}
	\PP[\widehat{\cal S}_{m,n} \subseteq {\cal S}^*] \rightarrow 1, \quad \text{as } (m,n) \rightarrow \infty .
\end{equation}
Relations (\ref{ShSa}) and (\ref{ShSb}) imply  $\PP[\widehat{\cal S}_{m,n} = {\cal S}^*] \rightarrow 1$ as $(m,n) \rightarrow \infty$,  that is, the statement of this theorem. 
First, the left-hand side of  KKT relations (\ref{eq7a})–(\ref{eq7c}) can be written as: 
\begin{align}
&	\frac{1}{n}   \sum^{m+n}_{i=m+1}  \big( \tau- \e1_{Y_i < \eX_i^\top \widehat \eb_{n}}\big)  X_{ji}\nonumber \\
& =\frac{1}{n}   \sum^{m+n}_{i=m+1} \big(\tau - \e1_{\varepsilon_i< 0}\big) X_{ji} \nonumber \\
&\quad+ \frac{1}{n}   \sum^{m+n}_{i=m+1} \big( \e1_{\varepsilon_i<0} -\e1_{\varepsilon_i < l^{-1} \eX_i^\top \eu}\big) X_{ji} \nonumber  \\ 
	& \equiv T_1+T_2.
	\label{ea14}
\end{align}
We will study $T_1$, $T_2$, the right-hand sides of equations (\ref{eq7a})–(\ref{eq7c}) under the assumption that $j \in \ \widehat{\cal S}_{m,n} \cap {{\cal S}^*} ^c$, and we will compare them with equation (\ref{ea14}).
By CLT for independent random variable sequences, under assumptions (A1) and (A2), we have $T_1=O_\PP(n^{-1/2})$. \\
For the term $T_2$, we first compute $\eE[T_2]$. Using the fact that as $t \rightarrow 0$, we have $F(t)-F(0)= t f(0)+o(t)$. Then we can write:
\begin{align*}
	\eE[T_2] & = n^{-1} \sum^{m+n}_{i=m+1} \big(F(0) - F(l^{-1}\eX_i^\top \eu)\big)X_{ji}\\
	&= - n^{-1}l^{-1} \sum^{m+n}_{i=m+1}  f(0) \eX_i^\top \eu X_{ji}(1+o(1))\\
	&=O(l^{-1}).
\end{align*}  
In the last derivation, we used the fact that the vectors $\eX_i$ and $\eu$ are bounded.  Let us now calculate the variance of $T_2$, for which we have $\Var[T_2] \leq \eE[T_2^2]$. For this last expectation, using the independence of the $\varepsilon_i$ we have:
\begin{align*}
	\eE[T_2^2] & = n^{-2} \sum^{m+n}_{i=m+1} \eE \big[ \big(\e1_{\varepsilon_i<0}- \e1_{\varepsilon_i < l^{-1} \eX_i^\top \eu}\big)^2 X^2_{ji}\big] \\
	& = n^{-2} \sum^{m+n}_{i=m+1} \bigg( \eE [ \e1_{\varepsilon_i <0} ] +\eE[\e1_{\varepsilon_i < l^{-1} \eX_i^\top \eu}] \\
	&\quad -2 \eE \big[\e1_{\varepsilon_i < \min(0, l^{-1}\eX_i^\top \eu)}\big]\bigg) X^2_{ji}.
\end{align*}
Without loss of generality, we assume that $0 < l^{-1}\eX_i^\top \eu$. The case $0 > l^{-1}\eX_i^\top \eu$ is handled in the same way. Then, using assumptions (A1), (A2) together with the fact $\max_{m+1 \leqslant i \leqslant m+n} \| \eX_i \|_2$ is bounded, we obtain:
\begin{align*}
	\eE[T_2^2] & = n^{-2} \sum^{m+n}_{i=m+1}  \big( F(0)- F(l^{-1}\eX_i^\top \eu)- 2 F(0)\big)X^2_{ji} \\
	& = n^{-2} \sum^{m+n}_{i=m+1} \big(F(l^{-1}\eX_i^\top \eu) -F(0)\big)X^2_{ji} \\
	& = n^{-2} \sum^{m+n}_{i=m+1} f(0)l^{-1}\eX_i^\top \eu X^2_{ji} (1+o(1))\\
	& =O(l^{-1}n^{-1}).
\end{align*}
By the Bienaymé–Chebyshev inequality, we then have:
\begin{equation*}
	T_2=O_\PP(\eE[T_2])=O_\PP(l^{-1}).
\end{equation*}
We obtained the following for relation (\ref{ea14}):
\begin{equation}
	\label{g_KKT}
	\begin{split}
&	\frac{1}{n}   \sum^{m+n}_{i=m+1}  \big( \tau- \e1_{Y_i < \eX_i^\top \widehat \eb_{n}}\big)  X_{ji}\\
 & \quad=O_\PP(n^{-1/2})+O_\PP(l^{-1}).
\end{split}
\end{equation}
On the other hand, 
\begin{equation}
	\label{dr1_KKT}
	\begin{split}
	\frac{\lambda_n v_{m,j}}{n}
	&=\frac{\lambda_n m^{{\gamma_1}/2}}{n}\frac{1}{|m^{1/2}\widetilde \beta_{m,j}|^{\gamma_1}} \\
&=O_\PP \big( \lambda_n m^{{\gamma_1}/2} n^{-1} \big).
	\end{split}
\end{equation}
We have used the fact that $|m^{1/2}\widetilde \beta_{m,j}|^{\gamma_1}$ is bounded with probability approaching 1 as $n \rightarrow \infty$.\\
We study:
\begin{equation}
	\label{ra}
	\begin{split}
	&B \equiv \frac{\lambda_n v_{m,j}}{n} \cdot \\
	&\cdot  \frac{n}{\tau \sum^{m+n}_{i=m+1} X_{ji} - \sum^{m+n}_{i=m+1} X_{ji} \e1_{Y_i < \eX_i^\top \widehat \eb_{n}} }.
	\end{split}
\end{equation}
We will continue the proof of this theorem by distinguishing between the three cases considered in Theorem \ref{TheoremTF 4.1} for the sequence $l$.\\
In  case {\it (i)}, we have $l=m^{1/2}$. Since $r_0 \geq 0$, then $T_1+T_2=O_\PP(n^{-1/2})$.  The term $B$ in relation (\ref{ra}) becomes:
\[
\begin{split}
O_\PP\big(\lambda_n m^{\gamma_1/2} \frac{n^{-1}}{n^{-1/2}}\big)&=O_\PP\big(\lambda_n m^{\gamma_1/2}n^{-1/2}\big)\\ &{\underset{(m,n) \rightarrow \infty}{\longrightarrow}} \infty .
\end{split}
\]
If  $\lambda_n m^{(\gamma_1+1)/2}n^{-1} \longrightarrow \infty$,  then $B =O_\PP \big(\lambda_n m^{\gamma_1/2} n^{-1} m^{1/2}\big) {\underset{(m,n) \rightarrow \infty}{\longrightarrow}} \infty$
Hence,  $ \lambda_n v_{m,j}n^{-1}$ is much larger than  $\left| \tau \sum^{m+n}_{i=m+1} X_{ji} - \sum^{m+n}_{i=m+1} X_{ji} \e1_{Y_i < \eX_i^\top \widehat \eb_{n}}\right|$, which contradicts   relations (\ref{eq7b}) and (\ref{g_KKT}).\\
If $\widehat \beta_{n,j} \neq \widetilde \beta_{m,j} $,  we compare  $\lambda_n v_{m,j} \sgn(\widehat \beta_{n,j})$ and $\eta_n \omega_{m,j} \sgn(\widehat \beta_{n,j} - \widetilde \beta_{m,j})$. Taking into account that $\widetilde \beta_{m,j} =m^{-1/2}$, we have:
\begin{equation*}
	\label{eq8}
	\begin{split}
	\frac{\lambda_n v_{m,j}}{\eta_n \omega_{m,j}} &= \frac{\lambda_n  m^{{ \gamma_1}/2}}{\eta_n m^{-{  \gamma_2}/2}}\\
	&=\frac{\lambda_n}{\eta_n}m^{(\gamma_1+\gamma_2)/2} {\underset{(m,n) \rightarrow \infty}{\longrightarrow}} \infty,
	\end{split}
\end{equation*}
and thus, the absolute value of the right-hand side of (\ref{eq7a}) converges in probability to $+\infty$, while the left-hand side is bounded.\\
Thus, if $l=m^{1/2}$, then relation (\ref{jhS}) is proven.\\
We consider case {\it (ii)}, which occurs for $l=n^{1/2}$. 
Then, the  left-hand side of (\ref{eq7a}) and (\ref{eq7b}) expressed as (\ref{g_KKT}) is: $T_1+T_2=O_\PP(n^{-1/2})+O_\PP(n^{-1/2})=O_\PP(n^{-1/2})$. On the other hand, for the first term on the right-hand side of (\ref{eq7a}) and (\ref{eq7b}) we have:
\[
\frac{\lambda_n v_{m,j}}{n}=O_\PP \bigg( \frac{\lambda_n}{n} m^{{\gamma_1}/2} \bigg)
\]
and for the second term on the right-hand side of (\ref{eq7a})  we have  $\eta_n \omega_{m,j}=O_\PP \big( \eta_n m^{{\gamma_2}/2} \big)$. Then, we deduce that:
\[
\begin{split}
\frac{\lambda_n v_{m,j}}{\eta_n \omega_{m,j}}&=O_\PP \bigg(\frac{\lambda_n}{\eta_n}m^{{(\gamma_1+\gamma_2)}/2}\bigg)
\end{split}
\]
\[
\begin{split}
& =O_\PP \bigg(\lambda_n \sqrt{\frac{m^{{\gamma_1}/2}}{n}} \frac{\sqrt{n}}{\eta_n} m^{{\gamma_2}/2} \bigg) \\
&{\underset{(m,n) \rightarrow \infty}{\longrightarrow}} \infty \cdot \infty =\infty .
\end{split}
\]
So, it is the first term $\lambda_n v_{m,j}$ that matters.\\
We compare the left-hand sides of (\ref{eq7a}) and (\ref{eq7b}) with $\lambda_n v_{m,j}$:
\[
\frac{n^{-1/2}}{\lambda_n v_{m,j}} n =\frac{n^{1/2}}{\lambda_n m^{{\gamma_1}/2}} {\underset{(m,n) \rightarrow \infty}{\longrightarrow}}  0.
\]
Therefore, the left-hand side and the right-hand side cannot be equal in (\ref{eq7a}) and (\ref{eq7b}). This implies relation (\ref{jhS}).\\
We consider case {\it (iii)}, which occurs when $l=n/\lambda_n$. In this case, since $n^{-1/2} \lambda_n {\underset{n \rightarrow \infty}{\longrightarrow}} \infty$, we have: $T_1+T_2=O_\PP(n^{-1/2})+O_\PP(n^{-1}\lambda_n)=O_\PP(n^{-1} \lambda_n)$. We compare the right-hand sides of equations (\ref{eq7a}) and (\ref{eq7b}):  $\lambda_n v_{m,j}(\eta_n \omega_{m,j})^{-1}=\lambda_n \eta_n^{-1}m^{(\gamma_1+\gamma_2)/2} {\underset{(m,n) \rightarrow \infty}{\longrightarrow}} \infty \cdot \infty  =\infty$ and therefore it is $\lambda_n v_{m,j} \sgn(\widehat \beta_{n,j})$ that dominates $\eta_n \omega_{m,j} \sgn(\widehat \beta_{n,j}- \widetilde \beta_{m,j})$. We compare the left and right sides of the KKT relations. Taking into account relation (\ref{dr1_KKT}), the term $B$ in (\ref{ra}) is:  
\[
\frac{\lambda_n m^{\gamma_1/2}n^{-1}}{l^{-1}}= \lambda_n m^{\gamma_1/2}n^{-1} \frac {n}{\lambda_n}=m^{\gamma_1/2} {\underset{(m,n) \rightarrow \infty}{\longrightarrow}}\infty.
\]
Thus, in relations (\ref{eq7a}) and (\ref{eq7b}), the left-hand side is bounded, but the right-hand side is not. Therefore,    relation (\ref{jhS}) is proven. \\
The proof of the theorem is thus complete.\\
\hspace*{\fill}$\blacksquare$  \\

\end{document}